\begin{document}
\newtheorem{Remark}{Remark}
\newtheorem{remark}{Remark}
\newtheorem{thm}{Theorem}
\newtheorem{lem}{Lemma}
\renewcommand{\algorithmicrequire}{\textbf{Input:}} 
\renewcommand{\algorithmicensure}{\textbf{Output:}}

\title{
Flexible Multi-Beam Synthesis and Directional Suppression Through Transmissive RIS
\thanks{Manuscript received xxxxxx; revised xxx, 2024; accepted xxx, 2024.}
\thanks{This work is supported by the National Natural Science Foundation of China (NSFC) under Grant NO.12141107, and the Interdisciplinary Research Program of HUST (2023JCYJ012) ({\it Corresponding author: Ke Yin}).}
}

\author{Rujing~Xiong,~\IEEEmembership{Graduate Student Member,~IEEE,}    
        Ke Yin,  
        Jialong~Lu, 
        Kai Wan,~\IEEEmembership{Member,~IEEE,} 
        Tiebin Mi,~\IEEEmembership{Member,~IEEE,} and
        Robert~Caiming~Qiu,~\IEEEmembership{Fellow,~IEEE}
\thanks{R.~Xiong, J. Lu, K. Wan, T. Mi, and R. Qiu are with the School of Electronic Information and Communications, Huazhong University of Science and Technology, Wuhan 430074, China (e-mail: \{rujing@hust.edu.cn, rujingxiong@gmail.com\}, \{jialong, mitiebin, caiming\}@hust.edu.cn).}
\thanks{K. Yin is with the Center for Mathematical Sciences, Huazhong University of Science and Technology, Wuhan 430074, China (e-mail: kyin@hust.edu.cn).}

}
\maketitle

\maketitle

\begin{abstract}
Despite extensive research on reconfigurable intelligent surfaces (RISs) in recent years, existing beamforming methods still face significant challenges in achieving flexible and robust beam synthesis, which is an essential capability for a wide range of communication scenarios. This paper introduces a Max-min criterion with nonlinear constraints, leveraging optimization techniques to simultaneously enable flexible multi-beam synthesis and directional suppression using transmissive RIS. Firstly, a realistic model grounded in geometrical optics is introduced to characterize the input/output behaviors of transmissive RISs, effectively bridging the gap between explicit beamforming requirements and practical implementations. Subsequently, a highly efficient algorithm for constrained Max-min optimizations involving quadratic forms is developed. By introducing an auxiliary variable and applying the compensated convexity transform, we successfully reformulate the original non-convex problem and obtain the optimal solution iteratively. This approach is readily applicable to a wide range of constrained Max-min optimization problems. Finally, numerical simulations and prototype experiments are conducted to validate the effectiveness of the proposed framework. The results demonstrate that the proposed algorithm can effectively enhance or selectively suppress signal beams in designated spatial directions, outperforming existing methods in terms of beam control accuracy and robustness. This framework provides valuable insights and references for practical communications applications such as physical layer security and interference mitigation.

\end{abstract}

\begin{IEEEkeywords}
Reconfigurable intelligent surface (RIS), beam synthesis, directional suppression, Max-min, non-convex optimization, prototype experiment.
\end{IEEEkeywords}

\section{Introduction}
\IEEEPARstart{R}{econfigurable} intelligent surface (RIS) is anticipated to be a pivotal technology for future 6G communication networks due to its great capability to reconfigure and optimize wireless propagation environments~\cite{cui2014coding,basar2019wireless,wu2023intelligent}. A RIS consists of a number of well-designed passive units, each capable of independently manipulating the characteristics of incident electromagnetic (EM) waves, such as phase and amplitude. By dynamically adjusting these factors in response to environmental conditions, RIS enables the redistribution of incident waves to formulate directional beams, 
thereby significantly enhancing coverage, transmission rates, and overall network performances~\cite{xiong2024fair}. Additionally, since RIS has no dedicated signal processing module, it is less susceptible to interference and noise, which is beneficial for signal transmission. Moreover, compared to a relay, the passive nature of the RIS makes it simpler in structure and requires fewer complex components, leading to lower system power consumption and cost. Due to its distinctive features, RIS has been widely employed across various wireless systems for enhanced performance.

In wireless communications, to tackle the limitations in scenarios such as outdoor/indoor through walls or windows and outside/inside a vehicle~\cite{tang2023transmissive,li2022coverage,kitayama2021transparent}, the transmissive RIS technique is garnering increasing attention~\cite{docomo2020docomo,bai2020high,wan2022space,zhang2024design}. Unlike the widely studied reflective RIS, the transmissive RIS allows signals to transmit through itself to form directional beams. Meanwhile, similar to reflective RISs, transmissive RISs can be utilized to establish extra refractive communication links between the base station (BS) and user equipment (UE). Furthermore, owing to the absence of physical topology constraints~\cite{wang2023doppler}, transmissive RIS is increasingly utilized to design new transmitter architectures, facilitating beamforming design and power allocation in both cell~\cite{grossi2024beampattern,li2021beamforming,sun2022ris,jamali2020intelligent,li2023robust,li2023towards} and cell-free networks~\cite{demir2024usercentric}. In such architectures, transmissive RIS is deployed in the front of the BS’s feeds to serve as both the analog-precoder networks and transmit antennas, such that the system energy efficiency~\cite{sun2022energy} and spectral efficiency~\cite{sun2022ris} can be enhanced with low hardware cost. 
Additionally, the transmissive RIS boasts high aperture efficiency due to its structural features, which eliminates the issues of feed source occlusion and the interference from surface-reflected waves~\cite{li2024transmissive,li2021beamforming, liu2023transmissive}. Moreover, transmissive RIS can be stacked to form a structure similar to artificial neural networks, enabling advanced signal processing directly in the native EM waves domain~\cite{wang2024multiuser,niu2024stacked}. These factors render the transmissive RIS efficient in operation and position it with great promise for future communication networks~\cite{wan2022space}. 

Beam synthesis characterizes the process through which RISs convert the energy from incident electromagnetic waves into reflected or refracted waves, explicitly forming directional beams~\cite{xiong2024fair}. As a primary function of a reconfigurable passive antenna array, it determines the pattern of the radiation waves and significantly impacts the signal-to-noise ratio (SNR) and capacity of the communication systems, thereby playing a fundamental and critical role in signal coverage, target detection, location, and tracking, as well as in anti-eavesdropping and anti-interference communications~\cite{wu2023intelligent,xiong2024fair}. In both reflective RIS and transmissive RIS, the simple beam synthesis function can be achieved through maximal ratio combining (MRC), such as phase configuration~\cite{han2020half,tang2023transmissive}.

Additionally, given the similarity of mathematical models, the phase configuration methods utilized in reflective RIS can be readily applied to transmissive RIS systems~\cite{tang2023transmissive}. Nevertheless, existing methods are limited to achieving only simple beam synthesis functionality. In~\cite{tang2023transmissive,han2020cooperative,ma2023multi}, single beam aggregation and beam scanning were achieved through phase compensation methods~\cite{stutzman2012antenna}, where the units of reflective RIS or transmissive RIS are configured to compensate for the phase differences introduced by different path lengths from the transmitter and receiver to each unit on RIS. Additionally, exhaustive search~\cite{wang2022beam}, hierarchical search~\cite{xiao2016hierarchical,wang2023hierarchical}, and machine learning~\cite{heng2021machine} methods were also employed to deal with the simple beam synthesis like beam alignment.

For achieving complex beam synthesis, the authors in~\cite{droulias2024reconfigurable} presented a design of the RIS as a spatial filter based on transfer function principles by modifying its shape and size. While this method facilitated various beam syntheses through the RIS, it altered the hardware foundation, leading to limited flexibility. Additionally,~\cite{rahal2022arbitrary} and~\cite{rahal2023performance} approximated the desired beam pattern by the method based on a predefined lookup table. In~\cite{meng2023efficient}, the author used the superposition of the reflection coefficient associated with each beam for multi-beam generation. The authors in~\cite{grossi2024beampattern} investigated the least squares amplitude beampattern matching method for transmitter beampattern design, indicating that for the same desired radiation pattern, the RIS can achieve comparable performance with a full-digital multiple-input multiple-output (MIMO) array. Furthermore, in our previous work~\cite{xiong2024fair}, we have demonstrated that through proper phase configuration, the RIS can implement flexible beam synthesis such as beam-splitting and wide-beam generation, and control the received signal power at different users. 

However, the aforementioned works primarily focused on the direction of the main beams, without considering the impact of sidelobe energy. Sidelobes may introduce additional interference and noise, potentially degrading system performance in complex networks. Furthermore, elevated sidelobe levels heighten the risk of unintended signal leakage, posing potential threats to information security. Consequently, effective sidelobe suppression and flexible beam control are crucial for enhancing interference resilience and physical layer security in wireless communication systems.

\subsection{Contributions}

In this work, we address the flexible beam synthesis challenge by incorporating sidelobe constraints into the RIS beamforming design. Through formulating and solving constrained Max-min problems, the proposed framework allows the target signal to be enhanced or suppressed in specified spatial directions, leading to explicitly realized~\textbf{high-directional beams} or \textbf{beam nulls}. 
To our knowledge, this represents the first successful implementation of simultaneous flexible beam enhancement and directional suppression using a passive antenna array solely through phase configuration. 

The main contributions can be summarized as follows:

\begin{itemize}
\item \textbf{Physical model and constrained Max-min problem formulation}. 
Based on the electromagnetic response of the RIS units~\cite{xiong2024fair,mi2023towards}, we present practical and realistic models grounded in geometrical optics to characterize the multiple input/output behaviors of transmissive RISs in the radiation field. The MIMO signal model captures the phase and loss variations resulting from differences in the path lengths of the incident and scattered signals across different units. Furthermore, considering the existence of unauthorized users in the system, a constrained Max-min optimization problem is formulated for exploring the flexible beam synthesis capabilities, including directional suppression, at the transmissive RIS side.

\item \textbf{New method for the optimal solution}. To efficiently solve the constrained Max-min problem, we introduce an auxiliary variable, along with a smooth approximation based on the compensated convexity transform, to reformulate the original non-convex optimization problem. The approximation function is tight and continuous differentiable with Lipschitz continuous gradient (i.e., $C^{1,1}$ smooth). A bisection-based (BIS) algorithm is then proposed, which iteratively converges to the optimal solution with a computational complexity of $\mathcal{O}(\log L)$, where $L$ denotes the size of the search space.
In each iteration, a smooth unconstrained sub-problem is solved using the accelerated gradient descent method. The proposed approach is related to the Majorization Minimization (MM) algorithmic framework. It is independent of specific signal models and exhibits excellent extensibility, making it readily applicable to a broad class of constrained Max-min optimization problems.

\item \textbf{Flexible beam synthesis exploration and achievement}. 
We investigate the beam redistribution capabilities of transmissive RISs using the proposed framework. Within this framework, energy distribution across different directional beams can be flexibly adjusted, allowing for precise signal enhancement or physical suppression.  
Moreover, the proposed algorithm is readily applicable to reflective and other types of passive RISs by adopting appropriate modeling strategies and optimization problem formulations.
The effectiveness of the flexible beam synthesis framework is validated through extensive simulations and prototype measurements. 
\end{itemize}

\subsection{Outline}
The remainder of the paper is organized as follows. In Section~\ref{Section2}, we present the modeling of transmissive RIS-aided MIMO communications and formulate the beam synthesis problems for maximizing the minimum received signal power among target communication users, with the constraint at certain directions related to unauthorized users like eavesdroppers. Section~\ref{Section3} proposes an efficient algorithm based on compensated convexity transform and bisection method to solve the problem. In Section~\ref{Section4}, we validate the efficacy of the proposed framework and evaluate the performance in beam synthesis and received signal power through comprehensive simulations and practical measurement experiments. Finally, the paper is concluded in Section~\ref{Section5}.

\subsection{Reproducible Research}
The results are reproducible with codes at: \url{https://github.com/RujingXiong/RIS\_BeamSynthesisAndSuppression}

\subsection{Notations}
The imaginary unit is denoted by $j$. The magnitude component of a complex number is represented by $|\cdot|$. Unless explicitly specified, lower and upper case bold letters denote vectors and matrices. The conjugate transpose, conjugate, and transpose of~$\mathbf{A}$ are written as $\mathbf{A}^{\mathrm H}$, $\mathbf{A}^{*}$ and $\mathbf{A}^{\mathrm T}$, respectively. 

\section{System Model and problem formulation}\label{Section2}
Due to the unique structural characteristics, transmissive RIS is frequently utilized in transmitter architectures, as illustrated in Fig.~\ref{F3}. In this section, we consider a transmissive RIS-aided multi-user downlink system with the presence of unauthorized users, such as eavesdroppers or interfering users from adjacent cells. This consideration captures scenarios where a transmissive RIS functions as a part of the transmitter or a passive relay to establish additional refractive links, with the key distinction being the varying distances from the sources of incident signals to the RIS~\footnote{For analysis, we focus on radiated waves and do not consider waves with turbulent electric field effects in the reactive near-field region~\cite{yaghjian1986overview,liu2023near}.}.
\subsection{System Model}

\begin{figure}
  \centering
  \includegraphics[width=0.9\linewidth]{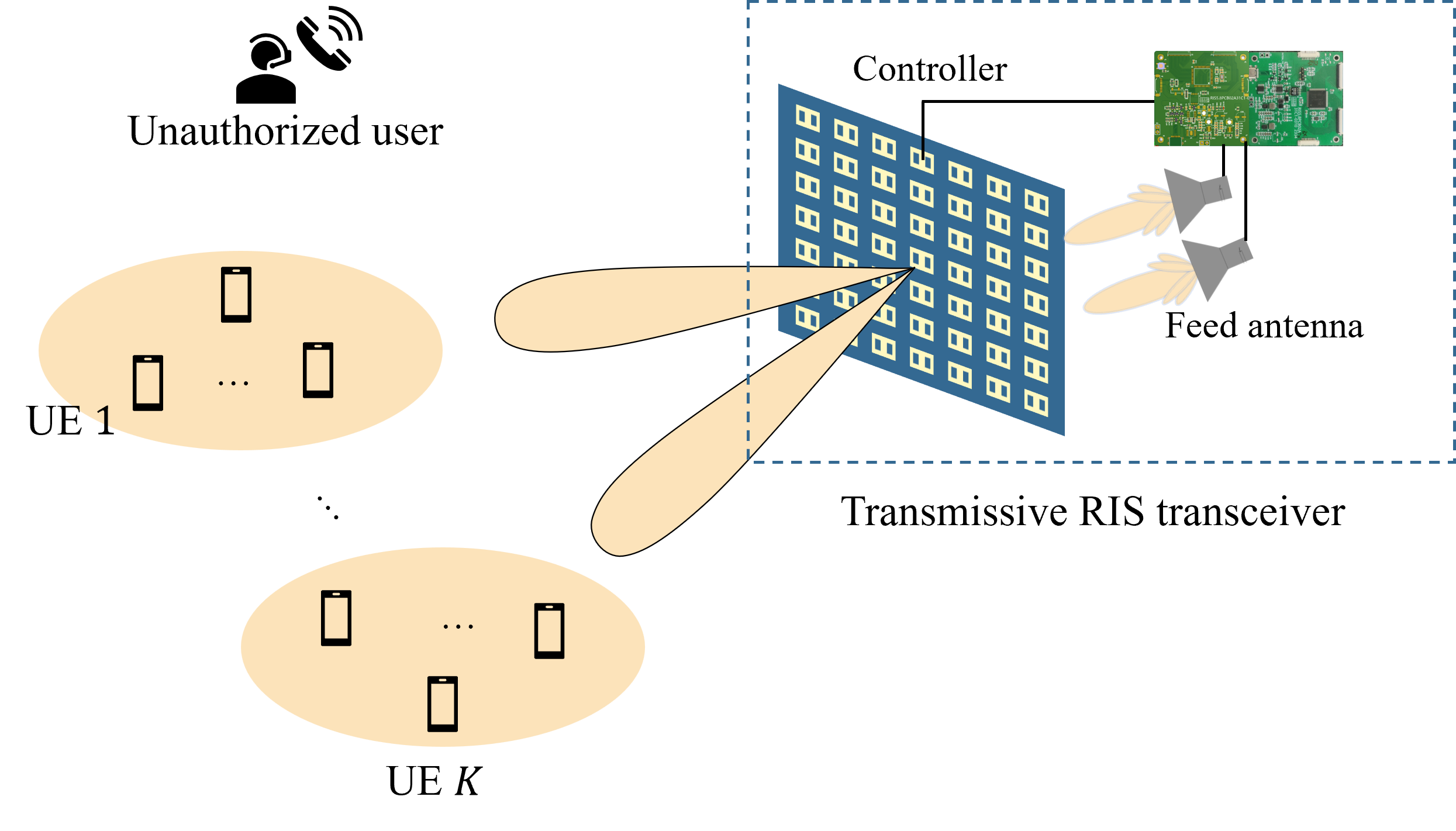}
  \caption{The communication system involving a transmitter architecture using transmissive RIS}
  \label{F3}
\end{figure}

As we concentrate on an arbitrary transmissive RIS situated on the $xoy$-plane, and its geometric center is located at the origin of the coordinate system, as shown in Fig.~\ref{F2}. The transmissive RIS is a uniformly planar array consisting of $N$ units located at $\mathbf{p}_n = [x_n,y_n,z_n]^\mathrm{T}, \ n = 1,\dots,N$. The unit spacing along both the $x$-axis and $y$-axis is $d$, usually set to half of the wavelength. Assume the RIS is illuminated by a single incident EM wave with electric field $E(r^\text{t}, \theta^\text{t}, \phi^\text{t})$, originating from the source point ($r^\text{t}, \theta^\text{t}, \phi^\text{t}$). Here symbols $r^\text{t} \in \mathbb{R}$, $\theta^\text{t} \in [0^{\circ},90^{\circ})$, and $\phi^\text{t} \in [0^{\circ},360^{\circ})$ denote the distance, the elevation angle, and the azimuth angle. As the EM wave propagates towards the $n$-th unit, the attenuation behavior is characterized by the factor $e^{ - j 2 \pi  r^\text{t}(n) / \lambda}/ r^\text{t}(n)$. Considering the unit on the transmissive RIS is isotropic, with scattering pattern $g$~\cite{mi2023towards}, the scattered field at the unit can be expressed as $E e^{ - j 2 \pi  r^\text{t}(n) / \lambda}/ r^\text{t}(n)g e^{\Omega_{n}}$. Similarly, involving propagation towards the communication UE, the electric field at the observation point along the $n$-th unit can be expressed as
\begin{multline}
E(n)(r^\text{r}, \theta^\text{r}, \phi^\text{r}) \\= g e^{\Omega_{n}} \frac{e^{ - j 2 \pi  r^\text{t}(n) / \lambda}e^{ - j 2 \pi r^\text{r}(n) / \lambda}}{r^\text{t}(n)r^\text{r}(n)}E(r^\text{t}, \theta^\text{t}, \phi^\text{t}),    
\end{multline} 
where $\lambda$ denotes the wavelength, $ r^\text{t}(n)$ and $r^\text{r}(n)$ represents the distance from the $n$-th unit of the RIS to the transmitter (Tx) and the receiver (Rx), and $\Omega_{n}$ represents the phase configuration of the $n$-th unit~\footnote{ Here we assume that the transmissive RIS units induce only phase shifts to the incident EM waves, while maintaining their amplitude unchanged.}. Thus, the aggregated electric field at the observation point is determined by the superposition of individual fields scattered by $N$ units, given by
\begin{multline}
E(r^\text{r}, \theta^\text{r}, \phi^\text{r}) \\=
\sum^N_{n =1} g e^{\Omega_{n}} \frac{e^{ - j 2 \pi  r^\text{t}(n) / \lambda}e^{ - j 2 \pi r^\text{r}(n) / \lambda}}{r^\text{t}(n)r^\text{r}(n)}E(r^\text{t}, \theta^\text{t}, \phi^\text{t}).
\end{multline}
When the illumination signal sources originate from multiple directions with strengths denoted by $ \{ E (r^\text{t}_m, \theta^\text{t}_m,\phi^\text{t}_m), m=1, \ldots, M\}$, the observed electric field is the superposition of multiple electric fields, as
\begin{equation}\label{Model_MISO}
\begin{aligned}
  E (r^\text{r}, \theta^\text{r}, \phi^\text{r})= 
  & \sum_{m=1}^{M} E ( r^\text{t}_m, \theta^\text{t}_m, \phi^\text{t}_m )  \\
  & \sum_{n=1}^{N} g  e^{j \Omega_n} \frac{ e^{ - j 2 \pi r^\text{t}_m (n) / \lambda} e^{ - j 2 \pi r^\text{r}(n) / \lambda } }{ r^\text{t}_m (n)  r^\text{r}(n) }   . 
\end{aligned}
\end{equation}
\begin{figure}[tbp]
  \centering
  \includegraphics[width=1\linewidth]{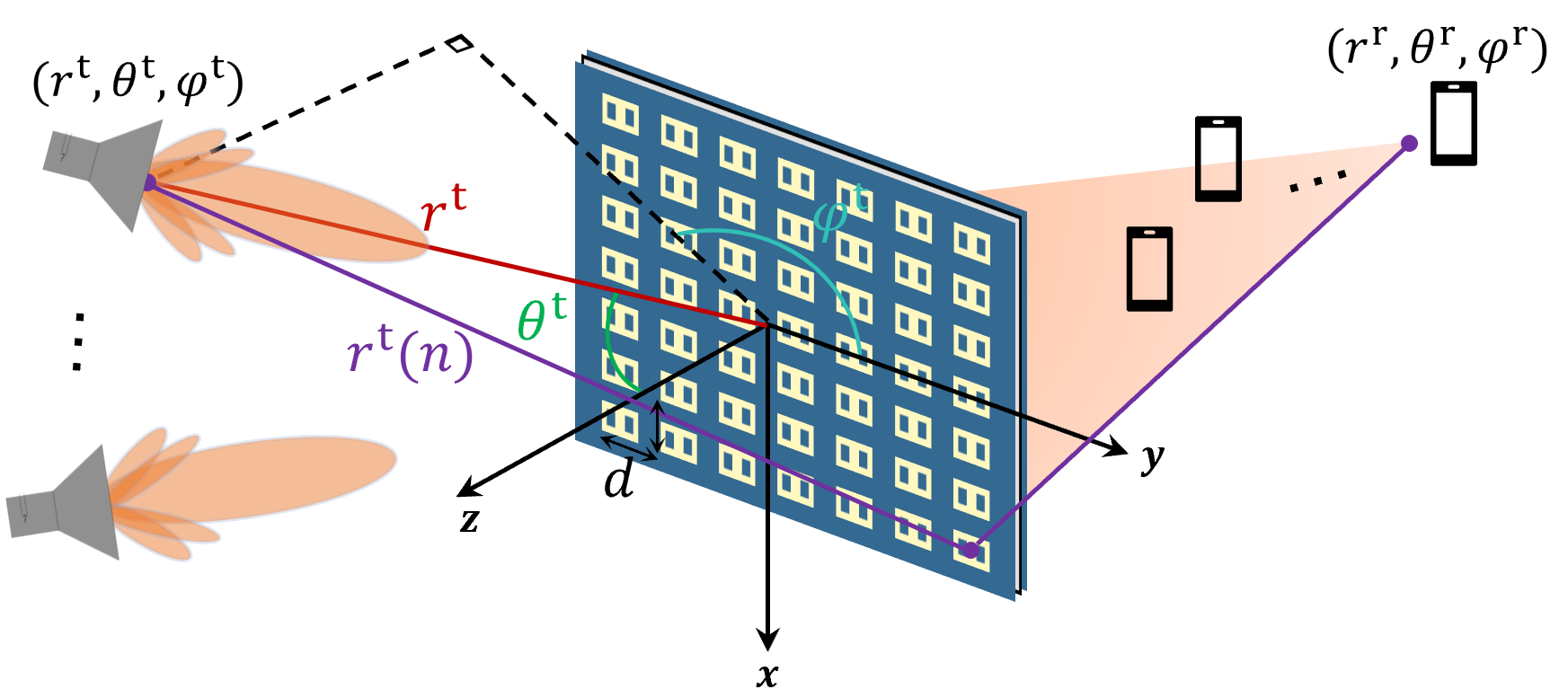}
  \caption{System model of transmissive RIS-assisted communications.}
  \label{F2}
\end{figure}
Suppose the amplitude attenuation from the single source to each RIS unit is virtually identical, and there are $K$ UEs in the far field. The observed electric field at different UE observation points can be expressed as in~\eqref{E:StructuredModel}.
\begin{figure*}[tbp]
  \begin{equation}\label{E:StructuredModel}
    \begin{aligned}
\underbrace{
        \begin{bmatrix}
        E (r^\text{r}_1, \theta^\text{r}_1, \phi^\text{r}_1) \\
        \vdots                           \\
        E (r^\text{r}_K, \theta^\text{r}_K, \phi^\text{r}_K)
      \end{bmatrix} 
 }_{\mathbf{E} (\mathbf{r}^\text{r}, \mathbf{\Theta}^\text{r}, \mathbf{\Phi}^\text{r})}
      = &  g
      \overbrace{
       \begin{bmatrix}
        \frac{ e^{-j 2 \pi r_1^\text{r} / \lambda }}{ r_1^\text{r} } &        &  0 \\
                               & \ddots &    \\
            0                  &        &  \frac{ e^{-j 2 \pi r_K^\text{r} / \lambda }}{ r_K^\text{r} }
       \end{bmatrix}}^{\mathbf{L} (\mathbf{r}^\text{r})}
 \overbrace{
      \begin{bmatrix}
        e^{ j 2 \pi \mathbf{p}_1^{\mathrm{T}} \mathbf{u} (\theta_{1}^\text{r}, \phi_{1}^\text{r})  / \lambda } & \cdots & e^{ j 2 \pi \mathbf{p}_N^{\mathrm{T}} \mathbf{u} (\theta_{1}^\text{r}, \phi_{1}^\text{r})  / \lambda } \\ 
\vdots & \ddots & \vdots\\
e^{ j 2 \pi \mathbf{p}_1^{\mathrm{T}} \mathbf{u} (\theta_{K}^\text{r}, \phi_{K}^\text{r})  / \lambda } & \cdots & e^{ j 2 \pi \mathbf{p}_N^{\mathrm{T}} \mathbf{u} (\theta_{K}^\text{r}, \phi_{K}^\text{r})  / \lambda }
      \end{bmatrix} }^{\mathbf{V} (\mathbf{\Theta}^\text{r}, \mathbf{\Phi}^\text{r}) }
      \overbrace{\begin{bmatrix}
        e^{j \Omega_1} &        &  0 \\
                               & \ddots &    \\
            0                  &        & e^{j \Omega_N}
      \end{bmatrix}}^{\text{diag} (e^{j \mathbf{\Omega} })} \\
      & \underbrace{\begin{bmatrix}
        e^{ j 2 \pi \mathbf{p}_1^{\mathrm{T}} \mathbf{u} (\theta_{1}^{\text{t}}(1), \phi_{1}^\text{t}(1))  / \lambda }  & \cdots  & e^{ j 2 \pi \mathbf{p}_1^{\mathrm{T}} \mathbf{u} (\theta_{M}^\text{t}(1), \phi_{M}^\text{t}(1))  / \lambda } \\
        \vdots  & \ddots & \vdots \\
        e^{ j 2 \pi \mathbf{p}_N^{\mathrm{T}} \mathbf{u} (\theta_{1}^{\text{t}}(N), \phi_{1}^\text{t}(N))  / \lambda }  & \cdots  & e^{ j 2 \pi \mathbf{p}_N^{\mathrm{T}} \mathbf{u} (\theta_{M}^\text{t}(N), \phi_{M}^\text{t}(N))  / \lambda }\\
      \end{bmatrix}}_{\mathbf{V}(\mathbf{\Theta}^\text{t}, \mathbf{\Phi}^\text{t})}
\underbrace{
      \begin{bmatrix}
        \frac{ e^{-j 2 \pi r^\text{t}_{1} / \lambda } }{ r^\text{t}_{1} } &        &  0 \\
                               & \ddots &    \\
            0                  &        &  \frac{ e^{-j 2 \pi r^\text{t}_{M} / \lambda } }{ r^\text{t}_{M}}
      \end{bmatrix}}_{\mathbf{L} (\mathbf{r}^\text{t}) }
     \underbrace{ \begin{bmatrix}
        E (r_1^\text{t}, \theta_{1}^\text{t}, \phi_{1}^\text{t}) \\
        \vdots    \\
        E (r_M^\text{t}, \theta_{M}^\text{t}, \phi_{M}^\text{t})
      \end{bmatrix}}_{\mathbf{E} ( \mathbf{r}^\text{t}, \mathbf{\Theta}^\text{t}, \mathbf{\Phi}^\text{t} ) }.
    \end{aligned}
  \end{equation}
  \medskip
  \hrule
\end{figure*}
Here, $E(r^\text{r}_k, \theta^\text{r}_k, \phi^\text{r}_k)$ denotes the field at UE $k$, where $k = 1,\cdots, K$. The terms $\theta_{m}^{\text{t}}(n), \phi_{m}^\text{t}(n)$ denote the elevation angle and azimuth angle of the $m$-th incident EM wave relative to the $n$-th unit of the transmissive RIS. $e^{j \mathbf{\Omega}} = [e^{j \Omega_1}, \cdots, e^{j\Omega_N}]^{\mathrm{T}}$ represent the phase configuration of the RIS. The direction vector is defined as $\mathbf{u} ( \theta, \phi) = [\sin \theta \cos \phi, \ \sin \theta \sin \phi, \ \cos \theta]^{\mathrm{T}}$.
Due to the similarity in their interaction with the incident signal~\cite{tang2023transmissive}, this signal model is equally applicable to reflective RIS when the RIS is positioned away from the transmitter and operates as a passive relay, with the primary difference being the constraints on the incident and transmission (reflection) angles. However, when the RIS is integrated into the transmitter architecture for beamforming, the model becomes inaccurate due to feed source occlusion and interference from reflected waves.

The advantage of the canonical linear representation \eqref{E:StructuredModel} is that the structure inherent in the channel is identified. The incident and scattered steering matrices $\mathbf{V} (\mathbf{\Theta}^\text{t}, \mathbf{\Phi}^\text{t})\in \mathbb{C}^{K\times N}$ and $\mathbf{V} (\mathbf{\Theta}^\text{r}, \mathbf{\Phi}^\text{r})\in \mathbb{C}^{N\times M}$ capture the phase variations resulting from differences in the path lengths of incident and scattered signals across different units. For uniform linear RISs, these matrices exhibit distinct Vandermonde structures~ \cite{mi2023towards}. Additionally, the diagonal matrices $\mathbf{L} (\mathbf{r}^\text{t})\in \mathbb{C}^{M\times M}$ and $\mathbf{L} (\mathbf{r}^\text{r})\in \mathbb{C}^{K\times K}$ account for the radial propagation of incident and reflected traveling waves from the sources to the RIS and from the RIS to the UEs, respectively. 
\begin{figure}[htbp]
  \centering
  \includegraphics[width=.9\linewidth]{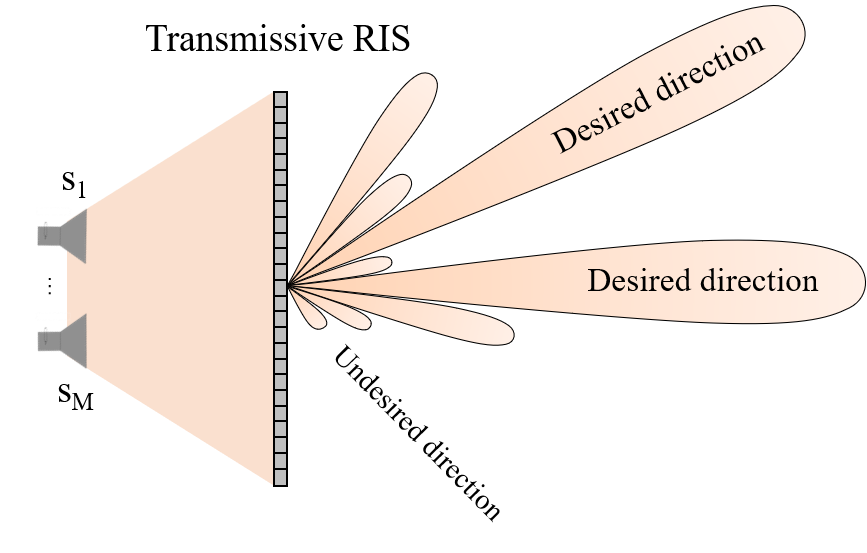}
  \caption{Beam synthesis through Transmissive RIS. The undesired direction may correspond to unauthorized users or users from neighboring cells.}
  \label{F1}
\end{figure}

\subsection{Optimization Formulation}
Given the potential presence of eavesdroppers and/or UEs from neighboring cells (collectively referred to as unauthorized UEs for conciseness in the following), we explore an optimal beam synthesis strategy aimed at maximizing the minimum received power, while imposing constraints to ensure that signal power in specific undesired directions remains below a predefined threshold, as shown in Fig.~\ref{F1}. This approach not only facilitates beam synthesis for multiple users but also provides a solution for addressing information leakage and interference with adjacent cells. Formally, this synthesis strategy can be expressed as 
\begin{equation}\label{FB1}
\begin{aligned}
  \max_{ \Omega_1, \dots, \Omega_{N} \in [0, 2\pi) } \min_{ k } & \  \left \{ P ( r^\text{r}_k, \theta^\text{r}_k, \phi^\text{r}_k ), k = 1, \ldots, K \right \}\\
{\rm s.t.}& \ P ( r^\text{r}_q, \theta^\text{r}_q, \phi^\text{r}_q) \leq \sigma_q, q = K+1,\dots,K+Q.
\end{aligned}
\end{equation}
In this context, we consider $K$ served communication UEs and $Q$ unauthorized UEs, the notation $P (\cdot) = \lvert E(\cdot) \rvert^2$ represents the power at a specific location, and $\sigma_q$ is the threshold value related to the unauthorized UE $q$.

With the input/output models~\eqref{E:StructuredModel}, it can be derived that the received signal electric field at a specific position \((r^\text{r}, \theta^\text{r}, \phi^\text{r})\) can be expressed as  
\begin{equation}
\begin{aligned}  
E &(r^\text{r}, \theta^\text{r}, \phi^\text{r}) \\&= g {{L} ({r}^\text{r})} {\mathbf{V} (\mathbf{\Theta}^\text{r}, \mathbf{\Phi}^\text{r}) } {\text{diag} (e^{j \mathbf{\Omega} })} {\mathbf{V}(\mathbf{\Theta}^\text{t}, \mathbf{\Phi}^\text{t})} {\mathbf{L} (\mathbf{r}^\text{t}) } {\mathbf{E} ( \mathbf{r}^\text{t}, \mathbf{\Theta}^\text{t}, \mathbf{\Phi}^\text{t} ) } \\
& =  \underbrace{
g {{L} ({r}^\text{r})} {\mathbf{V} (\mathbf{\Theta}^\text{r}, \mathbf{\Phi}^\text{r}) } {\text{diag} ({\mathbf{V}(\mathbf{\Theta}^\text{t}, \mathbf{\Phi}^\text{t})} {\mathbf{L} (\mathbf{r}^\text{t}) } {\mathbf{E} ( \mathbf{r}^\text{t}, \mathbf{\Theta}^\text{t}, \mathbf{\Phi}^\text{t} ) } )}
}_{\mathbf{h} (\mathbf{r}^\text{r}, \mathbf{\Theta}^\text{r}, \mathbf{\Phi}^\text{r} ; \mathbf{r}^\text{t}, \mathbf{\Theta}^\text{t}, \mathbf{\Phi}^\text{t} ; \mathbf{E}^\text{t} )^{\mathrm T}\ \in\ \mathbb{C}^N}
e^{j \mathbf{\Omega} }
\end{aligned}
\end{equation}  
Here, $ {L} ({r}^\text{r}) =\frac{ e^{-j 2 \pi r^\text{r} / \lambda }}{ r^\text{r} }$ is a scalar representing the free-space path loss and phase shift from the RIS center to the receiver, while \( {\mathbf{V} (\mathbf{\Theta}^\text{r}, \mathbf{\Phi}^\text{r}) } \) reduces to a $N$-dimensional row vector characterizing the phase differences across different RIS units.

Thus, the received power at the location $(r^\text{r}, \theta^\text{r}, \phi^\text{r})$ can be expressed as
\begin{equation}
 P(r^\text{r}, \theta^\text{r}, \phi^\text{r})= \left \lvert \mathbf{h} (\mathbf{r}^\text{r}, \mathbf{\Theta}^\text{r}, \mathbf{\Phi}^\text{r} ; \mathbf{r}^\text{t}, \mathbf{\Theta}^\text{t}, \mathbf{\Phi}^\text{t} ; \mathbf{E}^\text{t} )^{\mathrm T} e^{j \mathbf{\Omega} } \right \rvert^2
\end{equation}
As vector ${\mathbf{h} (\mathbf{r}^\text{r}, \mathbf{\Theta}^\text{r}, \mathbf{\Phi}^\text{r} ; \mathbf{r}^\text{t}, \mathbf{\Theta}^\text{t}, \mathbf{\Phi}^\text{t} ; \mathbf{E}^\text{t} )}$ is simplified to $\mathbf{h}$ and $\mathbf{w} = [e^{j \Omega_1}, \cdots, e^{j \Omega_N}]^{\mathrm{H}}$ is defined, the received power at the position \((r^\text{r}, \theta^\text{r}, \phi^\text{r})\) can be expressed as
\begin{equation}
P (r^\text{r}, \theta^\text{r}, \phi^\text{r})= \mathbf{w}^{\mathrm{H}} \mathbf{H} \mathbf{w},
\end{equation}
where $\mathbf{H} = \mathbf{h} \mathbf{h}^{\mathrm{H}}$ is a positive semi-definite Hermitian matrix.

Thus, we can rewrite the optimization problem in~\eqref{FB1} as
\begin{equation}\label{II0}
\begin{aligned}
\max_{ \Omega_1, \dots, \Omega_{N} } \min_{1\leq k\leq K} &\ \mathbf{w}^{\mathrm{H}} \mathbf{H}_k \mathbf{w} \\
{\rm s.t.}  &\ \mathbf{w}^{\mathrm{H}} \mathbf{H}_q \mathbf{w} \leq \sigma_q , \  q=K+1,\dots,K+Q. 
\end{aligned}
\end{equation}
Here, $\mathbf{H}_k$ and $\mathbf{H}_q$ denote the signal propagation correlation matrices for the communication UE \(k\) and the unauthorized UE \(q\), respectively.
Furthermore, we can introduce weight factors $\alpha_k $ to control signal power gain among different served UE directions, as we have discussed in our previous work~\cite{xiong2024fair}. For clarity, we define a matrix $\mathbf{A}_k \in \mathbb{C}^{N\times N}$ to incorporate the factor as $\mathbf{A}_k = 1/\alpha_k \mathbf{H}_k$. Finally, the optimization problem to achieve flexible beam synthesis with directional suppression can be formulated as
\begin{equation}\label{MM0}
\begin{aligned}
\text {(P0)} \ &\max_{ \Omega_1, \dots, \Omega_{N} } \min_{1\leq k\leq K} \ \mathbf{w}^{\mathrm{H}} \mathbf{A}_k \mathbf{w} \\
&{\rm s.t.} \ \mathbf{w}^{\mathrm{H}} \mathbf{H}_q \mathbf{w} \leq \sigma_q , \  q=K+1,\dots,K+Q. 
\end{aligned}
\end{equation}

Note that beyond the aforementioned optical model, optimization problems involving quadratic forms frequently arise in various channel models and other related research domains~\cite{xiong2024optimaldiscrete}. We will develop methods tailored to effectively address this class of problems in the following Section. It is important to emphasize that the variable $\mathbf{w}$  resides in the union of tori~\footnote{Tori is the direct product of unit circles.}, a highly non-convex set, rather than the Hilbert space $\mathbb{C}^N$. Additionally, the domain of $\Omega_n$ ($n=1,\dots, N$) can be relaxed to $\mathbb{R}$ for inherent periodicity.

\section{Optimization Method for the Constrained max-min problem.}\label{Section3}
\subsection{Problem Reformulation and Method for Optimal Solution.}
Problem (P0) belongs to a class of semi-infinite Max-min problems with non-convex constraints. In what follows, we denote $\boldsymbol{\Omega} = [\Omega_1,\cdots,\Omega_N]^{\mathrm T}\in\mathbb{R}^N$ for simplified representation. Recall that $\mathbf{w} = [e^{j \Omega_1}, \cdots, e^{j \Omega_N}]^{\mathrm{H}} \in \mathbb{T}^N$ ($N$-tori), we define the follow functions
\begin{equation}
\begin{aligned}
\left\{
\begin{aligned}
    f_k(\mathbf{w})&= -\mathbf{w}^{\mathrm{H}} \mathbf{A}_k \mathbf{w}, \ \text{where}\ k =1,\ldots, K, \\
    g_q(\mathbf{w})&= \mathbf{w}^{\mathrm{H}} \mathbf{H}_q \mathbf{w}, \ \text{where}\ q =K+1,\ldots, K+Q.
    \end{aligned}
\right.
\end{aligned}
\end{equation}
Since $\mathbb{T}^N$ is compact, ${f}_k(\mathbf{w})$ is bounded and compact. Then the problem (P0) is equivalent to 
\begin{equation}\label{MM1}
\begin{aligned}
 \ \max_{ \boldsymbol{\Omega} } \min_{1\leq k\leq K} -f_k(\mathbf{w})  &=- \min_{ \boldsymbol{\Omega} } \max_{1\leq k\leq K} \ f_k(\mathbf{w})\\
\text {s.t.} \ g_q(\mathbf{w}) &\leq \sigma_q, 
\end{aligned}
\end{equation}
where $f_k(\mathbf{w})$ and $g_q(\mathbf{w})$ are continuous differentiable with Lipschitz continuous gradient (i.e., $C^{1,1}$ smooth). It is known that projection onto $\mathbb{R}^N$ is simple. As we introduce a variable $t \in \mathbb{R}$, the constrained minimization problem is equivalent to 
\begin{equation}\label{equation-t}
\begin{aligned}
\min _{\boldsymbol{\Omega}} \ t &\\
\text { s.t. } f_k(\mathbf{w})-t &\leq 0, \\
g_q(\mathbf{w})-\sigma_q &\leq 0.
\end{aligned}
\end{equation}

We define the function
\begin{equation}\label{min-max}
F(t)=\min _{\boldsymbol{\Omega}} \max \left\{\left\{ f_k(\mathbf{w})-t \right\}_{k=1}^K,\left\{ g_q(\mathbf{w})-\sigma_q \right\}_{q=K+1}^{K+Q} \right\} .
\end{equation}
Then, the original constrained Max-min problem \eqref{MM0} is equivalent to solving
\begin{equation}
F(t)=0.
\end{equation}
For simplicity of notations, we denote the vector-valued function 
\begin{equation}
\label{hKQ}
h_{K+Q}(\mathbf{w},t) = \left\{\left\{ f_k(\mathbf{w})-t \right\}_{k=1}^K,\left\{ g_q(\mathbf{w})-\sigma_q \right\}_{q=K+1}^{K+Q}\right\},
\end{equation}
which takes values in $\mathbb{R}^{K+Q}$.
The evaluation of $F(t)$ is equivalent to solving the unconstrained minimization problem
\begin{equation}\label{Ft}
F(t)=\min _{\boldsymbol{\Omega}}\; M_{K+Q}(h_{K+Q}(\mathbf{w},t)),
\end{equation}
where $M_{K+Q}(\cdot)$ is the maximum function in $\mathbb{R}^{K+Q}$.
We show the monotonicity of $F(t)$ in the following theorem.
\begin{thm}\label{thm:mono_de}
The function $F(t)$ is monotonically decreasing in $t$.
\end{thm}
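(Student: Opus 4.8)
The plan is to exploit the fact that the parameter $t$ enters the objective in \eqref{min-max} \emph{only} through the terms $f_k(\mathbf{w})-t$, and does so affinely with slope $-1$, whereas the constraint terms $g_q(\mathbf{w})-\sigma_q$ carry no dependence on $t$ at all. Together with the coordinatewise monotonicity of the maximum function $M_{K+Q}(\cdot)$, this forces the inner objective to be non-increasing in $t$ for every fixed $\boldsymbol{\Omega}$, and the remaining work is to check that taking the minimum over $\boldsymbol{\Omega}$ preserves this monotonicity.

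First I would fix $\mathbf{w}$ (equivalently $\boldsymbol{\Omega}$) and set $\phi(\mathbf{w},t)=M_{K+Q}(h_{K+Q}(\mathbf{w},t))$. For any $t_1<t_2$ one has $f_k(\mathbf{w})-t_2<f_k(\mathbf{w})-t_1$ for every $k$, while each $g_q(\mathbf{w})-\sigma_q$ is unchanged. Since $M_{K+Q}$ is non-decreasing in each of its arguments, it follows at once that $\phi(\mathbf{w},t_2)\le\phi(\mathbf{w},t_1)$; that is, $\phi(\mathbf{w},\cdot)$ is non-increasing for each fixed $\mathbf{w}$.

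Second I would lift this pointwise statement to $F(t)=\min_{\boldsymbol{\Omega}}\phi(\mathbf{w},t)$. Because $\mathbb{T}^N$ is compact and $\phi(\cdot,t)$ is continuous (a finite maximum of the $C^{1,1}$ functions $f_k$ and $g_q$), the minimum is attained, say at $\mathbf{w}^{\star}_{t_1}$ for the value $t_1$. Then the chain $F(t_2)\le\phi(\mathbf{w}^{\star}_{t_1},t_2)\le\phi(\mathbf{w}^{\star}_{t_1},t_1)=F(t_1)$ gives $F(t_2)\le F(t_1)$, where the first inequality uses only that $\mathbf{w}^{\star}_{t_1}$ is an admissible candidate (not the optimizer) for the minimization at $t_2$, and the second uses the pointwise monotonicity from the previous step. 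This establishes the claimed monotonicity.

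The one point warranting care is the precise sense of \emph{decreasing}: the argument yields non-strict monotonicity, and strict monotonicity in fact generally fails, since once $t$ is large enough that every $f_k(\mathbf{w})-t$ drops below the $t$-independent constraint terms, $F(t)$ flattens to the constant $\min_{\boldsymbol{\Omega}}\max_{q}\{g_q(\mathbf{w})-\sigma_q\}$. I would therefore read the conclusion as non-increasing, which is exactly what the subsequent bisection scheme for solving $F(t)=0$ requires. The main (and only minor) obstacle is the bookkeeping around attainment of the minimizer and making explicit that admissibility, rather than optimality, of $\mathbf{w}^{\star}_{t_1}$ at $t_2$ is all that the first inequality needs.
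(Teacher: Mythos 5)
Your proof is correct and follows essentially the same route as the paper's: evaluate the objective at a minimizer $\mathbf{w}^{\star}_{t_1}$ for $t_1$, use coordinatewise monotonicity of the max to get pointwise monotonicity in $t$, then use admissibility (not optimality) of that point at $t_2$. Your version is in fact more careful than the paper's, which writes both inequalities as strict — a claim that fails precisely in the situation you identify, namely when the maximum is attained at one of the $t$-independent constraint terms $g_q(\mathbf{w})-\sigma_q$; the honest conclusion is that $F$ is non-increasing, which is all the subsequent bisection argument actually requires.
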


\begin{proof}
We assume $t_1<t_2$ and $\mathbf{w}_1$ satisfying
\begin{equation}
F(t_1)= M_{K+Q}(h_{K+Q}(\mathbf{w}_1,t_1)).
\end{equation}
Then by \eqref{hKQ}\eqref{Ft} we have
\begin{equation}
\begin{aligned}
M_{K+Q}(h_{K+Q}(\mathbf{w}_1,t_1)) &> M_{K+Q}(h_{K+Q}(\mathbf{w}_1,t_2)) \\&> \min _{\boldsymbol{\Omega}}\; M_{K+Q}(h_{K+Q}(\mathbf{w},t_2)),
\end{aligned}
\end{equation}
which implies $F(t_1)>F(t_2)$.
\end{proof}

By monotonicity of $F$, $F(t)=0$ has a unique solution, and it can be found by the bisection method (linear convergence), in which only the sign of $F(t)$ is needed. The difficulty of evaluating $F(t)$ as defined in \eqref{min-max} lies in the non-smooth nature of the maximum function. We replace this function with a tight smooth approximation, given by compensated convexity (upper) transform~\cite{zhang2008compensated,zhang2022tight}, as
\begin{equation}\label{CCT}
C_\lambda^u\left(M_{K+Q}\right)(\mathbf{y})=\underbrace{ \lambda||\mathbf{y}||^2-\frac{||2\lambda\mathbf{y}-P_{\Delta_{K+Q} }(2\lambda\mathbf{y})||^2}{4\lambda}}_{\text{Moreau envelope}}+\frac{1}{4 \lambda}
\end{equation}
$\forall \ \mathbf{y}\in\mathbb{R}^{K+Q}$, where $\lambda$ is the regularization parameter, $\Delta_{K+Q}=\{\mathbf{x}\in\mathbb{R}_+^{K+Q}:\sum_{k=1}^{K+Q} x_{k} \leq 1\}$ is the unit simplex, and $P_\Delta$ is the projection onto $\Delta$. Note that this approximation differs from Moreau-Yosida approximation~\cite{yosida2012functional,xiong2024fair,moreau1965proximite} with only a constant gap of $\frac{1}{4\lambda}$~\cite{zhang2022tight}.
\begin{thm}\label{thm:error1}
There is a uniform error estimate for the above approximation, as
\begin{equation}
-\frac{1}{4\lambda}\leq M_{K+Q}(\mathbf{y})-C_\lambda^u\left(M_{K+Q}\right)(\mathbf{y})\leq 0 \text {. }
\end{equation}
\end{thm}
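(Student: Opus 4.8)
The plan is to discard the explicit projection formula in favour of a variational representation of the Moreau-envelope term, which exposes $M_{K+Q}$ as the support function of the simplex and makes both inequalities transparent. First I would note that, since $P_{\Delta_{K+Q}}(2\lambda\mathbf{y})$ is by definition the minimizer over $\Delta_{K+Q}$ of $\|2\lambda\mathbf{y}-\mathbf{x}\|^2$, expanding $\|2\lambda\mathbf{y}-\mathbf{x}\|^2 = 4\lambda^2\|\mathbf{y}\|^2 - 4\lambda\langle\mathbf{y},\mathbf{x}\rangle + \|\mathbf{x}\|^2$ and substituting into \eqref{CCT} cancels the $\lambda\|\mathbf{y}\|^2$ terms, turning the minimal distance into a maximal penalized linear functional:
\[
C_\lambda^u(M_{K+Q})(\mathbf{y}) = \frac{1}{4\lambda} + \max_{\mathbf{x}\in\Delta_{K+Q}}\left(\langle\mathbf{y},\mathbf{x}\rangle - \frac{\|\mathbf{x}\|^2}{4\lambda}\right).
\]
Everything then reduces to sandwiching this penalized maximum between $M_{K+Q}(\mathbf{y})-\frac{1}{4\lambda}$ and $M_{K+Q}(\mathbf{y})$.

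For the inequality $M_{K+Q}(\mathbf{y})-C_\lambda^u(M_{K+Q})(\mathbf{y})\le 0$, I would lower-bound the maximum by a single feasible vertex. Picking $i^\star\in\arg\max_i y_i$ and testing $\mathbf{x}=\mathbf{e}_{i^\star}\in\Delta_{K+Q}$ gives $\langle\mathbf{y},\mathbf{e}_{i^\star}\rangle - \tfrac{1}{4\lambda}\|\mathbf{e}_{i^\star}\|^2 = M_{K+Q}(\mathbf{y}) - \frac{1}{4\lambda}$, so the penalized maximum is at least this value and therefore $C_\lambda^u(M_{K+Q})(\mathbf{y})\ge M_{K+Q}(\mathbf{y})$.

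For the reverse estimate $M_{K+Q}(\mathbf{y})-C_\lambda^u(M_{K+Q})(\mathbf{y})\ge -\frac{1}{4\lambda}$, I would simply drop the nonnegative penalty inside the maximum, $\langle\mathbf{y},\mathbf{x}\rangle - \frac{\|\mathbf{x}\|^2}{4\lambda}\le\langle\mathbf{y},\mathbf{x}\rangle$, and use that the maximum of a linear functional over the simplex is its support function, $\max_{\mathbf{x}\in\Delta_{K+Q}}\langle\mathbf{y},\mathbf{x}\rangle = M_{K+Q}(\mathbf{y})$, attained at a vertex. Hence the penalized maximum is at most $M_{K+Q}(\mathbf{y})$ and $C_\lambda^u(M_{K+Q})(\mathbf{y})\le M_{K+Q}(\mathbf{y}) + \frac{1}{4\lambda}$, closing the two-sided bound.

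I expect the main obstacle to be the identity $\max_{\mathbf{x}\in\Delta_{K+Q}}\langle\mathbf{y},\mathbf{x}\rangle = M_{K+Q}(\mathbf{y})$, since it is exactly what pins down both the sign of the error and the constant $\frac{1}{4\lambda}$. It relies on the coordinate maximum being the support function of the (probability) simplex, so I would verify that the active normalization is $\sum_k x_k = 1$: if the origin were admissible, then for a $\mathbf{y}$ with all-negative entries the maximum would be $0>M_{K+Q}(\mathbf{y})$ and the upper error bound would be lost. The remaining steps — the completing-the-square identity and the existence of the projection and of the maximizer — are routine given that $\Delta_{K+Q}$ is nonempty, compact and convex, and one also sees that the bound is sharp, the $0$ end being reached whenever a vertex is the optimizer and the $-\frac{1}{4\lambda}$ end being approached as $\lambda\to\infty$.
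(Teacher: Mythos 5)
Your proof is correct, and it necessarily takes a different route from the paper's, because the paper does not actually prove Theorem~\ref{thm:error1}: its ``proof'' is a citation of \cite{zhang2022tight}. Your self-contained argument is sound. Completing the square converts the projection form of \eqref{CCT} into the variational form
\begin{equation*}
C_\lambda^u\left(M_{K+Q}\right)(\mathbf{y})=\frac{1}{4\lambda}+\max_{\mathbf{x}\in\Delta_{K+Q}}\left(\langle\mathbf{y},\mathbf{x}\rangle-\frac{\|\mathbf{x}\|^2}{4\lambda}\right),
\end{equation*}
after which testing the vertex $\mathbf{e}_{i^\star}$, $i^\star\in\arg\max_i y_i$, gives $C_\lambda^u(M_{K+Q})(\mathbf{y})\geq M_{K+Q}(\mathbf{y})$, and discarding the nonnegative penalty together with the support-function identity $\max_{\mathbf{x}\in\Delta_{K+Q}}\langle\mathbf{y},\mathbf{x}\rangle=M_{K+Q}(\mathbf{y})$ gives $C_\lambda^u(M_{K+Q})(\mathbf{y})\leq M_{K+Q}(\mathbf{y})+\frac{1}{4\lambda}$. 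What the explicit proof buys over the paper's citation is exactly the caveat you flag at the end: the support-function identity requires the probability simplex with the \emph{equality} constraint $\sum_k x_k=1$, whereas the paper literally defines $\Delta_{K+Q}$ with $\sum_k x_k\leq 1$. Under that literal definition the theorem is false, not merely unproved: if every $y_i<0$, then $P_{\Delta_{K+Q}}(2\lambda\mathbf{y})=\mathbf{0}$, so \eqref{CCT} collapses to $C_\lambda^u(M_{K+Q})(\mathbf{y})=\frac{1}{4\lambda}$ while $M_{K+Q}(\mathbf{y})$ is arbitrarily negative, violating the left inequality. This regime is not hypothetical in the surrounding algorithm: $h_{K+Q}(\mathbf{w},t)$ has all entries negative precisely when $t$ exceeds the root and all constraints are slack, which is half of what the bisection must detect; indeed, with the inequality simplex one would have $C_\lambda^u(M_{K+Q})\geq\frac{1}{4\lambda}$ pointwise (test $\mathbf{x}=\mathbf{0}$), hence $F_\lambda(t)\geq\frac{1}{4\lambda}$ for all $t$, and the test $F_\lambda(t)<-\frac{1}{4\lambda}$ in Algorithm~\ref{alg-1.0} could never fire. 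So the set in \eqref{CCT} must be read with equality, consistent with \cite{zhang2022tight}, and with that reading every step of your proof stands. The only loose statement is your closing sharpness remark: the $-\frac{1}{4\lambda}$ end is approached as the dimension $K+Q$ grows (at $\mathbf{y}=c\mathbf{1}$ the error equals $-\frac{K+Q-1}{4\lambda(K+Q)}$), not by letting $\lambda\to\infty$, which drives both ends of the bound to zero.
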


\begin{proof}
The proof can be found in \cite{zhang2022tight}.
\end{proof}

The tightness of the approximation is shown by the uniform convergence of the error as $\lambda \rightarrow \infty$. The gradient of the approximate function is given by
\begin{equation}
\nabla_{\mathbf{w}} C_\lambda^u\left(M_{K+Q}\right)(\mathbf{y})=P_{\Delta_{K+Q}}(2\lambda \mathbf{y}),
\end{equation}
which is $2 \lambda$-Lipschitz continuous.

Subsequently, a function that approximates $F(t)$ in~\eqref{min-max} can be defined as
\begin{equation}\label{uncons}
F_\lambda(t)=\min _{\boldsymbol{\Omega}} C_\lambda^u (M_{K+Q})(h_{K+Q}(\mathbf{w},t)) .
\end{equation}
The right-hand side (RHS) of~\eqref{uncons} is a smooth unconstrained optimization problem and can be solved by gradient-based methods such as Nesterov's accelerated gradient descent method \cite{xiong2024fair,nesterov1998introductory,devolder2014first}. Additionally, $F_\lambda(t)$ is also monotonically decreasing in $t$ and has a root for $\lambda$ sufficiently large. The following theorem gives the relation between $F(t)$ and $F_\lambda(t)$.
\begin{thm}\label{thm:error_est}
Suppose the minimizer of the RHS of~\eqref{min-max} is $\boldsymbol{\Omega}^*$, then the minimizer $\hat{\boldsymbol{\Omega}}$ of the RHS of~~\eqref{uncons} satisfies
\begin{equation}\label{eq:error_bnd}
-\frac{1}{4\lambda}\leq F(t)-F_\lambda(t)\leq 0.
\end{equation}
\end{thm}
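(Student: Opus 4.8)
The plan is to transfer the pointwise, \emph{uniform} approximation bound of Theorem~\ref{thm:error1} through the composition with $h_{K+Q}$ and then through the outer minimization over $\boldsymbol{\Omega}$. To organize this, I would abbreviate the two objectives as $\psi(\mathbf{w}) := M_{K+Q}(h_{K+Q}(\mathbf{w},t))$ and $\psi_\lambda(\mathbf{w}) := C_\lambda^u(M_{K+Q})(h_{K+Q}(\mathbf{w},t))$, so that $F(t)=\min_{\boldsymbol{\Omega}}\psi(\mathbf{w})$ and $F_\lambda(t)=\min_{\boldsymbol{\Omega}}\psi_\lambda(\mathbf{w})$ by \eqref{min-max} and \eqref{uncons}. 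The crucial observation is that Theorem~\ref{thm:error1} holds for \emph{every} $\mathbf{y}\in\mathbb{R}^{K+Q}$; evaluating it at the particular argument $\mathbf{y}=h_{K+Q}(\mathbf{w},t)$ yields, for all $\boldsymbol{\Omega}$,
\begin{equation}
\psi_\lambda(\mathbf{w})-\frac{1}{4\lambda}\ \leq\ \psi(\mathbf{w})\ \leq\ \psi_\lambda(\mathbf{w}).
\end{equation}
Both minima are attained, since $\mathbf{w}\in\mathbb{T}^N$ ranges over a compact set and both integrands are continuous, as noted in the discussion preceding~\eqref{MM1}.

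For the upper inequality $F(t)-F_\lambda(t)\leq 0$, I would take $\hat{\boldsymbol{\Omega}}$ to be the minimizer of $\psi_\lambda$ and write $\hat{\mathbf{w}}$ for the corresponding phase vector, so that $F_\lambda(t)=\psi_\lambda(\hat{\mathbf{w}})$, and then chain
\begin{equation}
F(t)=\min_{\boldsymbol{\Omega}}\psi(\mathbf{w})\ \leq\ \psi(\hat{\mathbf{w}})\ \leq\ \psi_\lambda(\hat{\mathbf{w}})\ =\ F_\lambda(t),
\end{equation}
where the last inequality is the right half of the pointwise bound. Equivalently, this is just monotonicity of the minimum under the relation $\psi\leq\psi_\lambda$.

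For the lower inequality $F(t)-F_\lambda(t)\geq -\tfrac{1}{4\lambda}$, I would symmetrically evaluate at the minimizer $\boldsymbol{\Omega}^*$ of $\psi$, writing $\mathbf{w}^*$ for its phase vector so that $F(t)=\psi(\mathbf{w}^*)$, and use the left half of the pointwise bound:
\begin{equation}
F_\lambda(t)=\min_{\boldsymbol{\Omega}}\psi_\lambda(\mathbf{w})\ \leq\ \psi_\lambda(\mathbf{w}^*)\ \leq\ \psi(\mathbf{w}^*)+\frac{1}{4\lambda}\ =\ F(t)+\frac{1}{4\lambda},
\end{equation}
which rearranges to the claimed bound. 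Combining the two one-sided estimates gives $-\tfrac{1}{4\lambda}\leq F(t)-F_\lambda(t)\leq 0$.

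There is no real analytic obstacle here: once Theorem~\ref{thm:error1} is in hand the result is essentially a ``sandwich'' argument, and the only points requiring care are, first, keeping the directions of the two inequalities straight, since $C_\lambda^u$ is an \emph{upper} transform so that $\psi_\lambda$ dominates $\psi$; and second, noting that the estimate of Theorem~\ref{thm:error1} is uniform in $\mathbf{y}$. This uniformity is exactly what survives both the composition with the inner map $h_{K+Q}(\cdot,t)$ and the subsequent minimization, so that a pointwise gap of at most $\tfrac{1}{4\lambda}$ passes unchanged to the two minimal values. The hypothesized minimizers $\boldsymbol{\Omega}^*$ and $\hat{\boldsymbol{\Omega}}$ are used only to realize the two one-sided bounds as genuine minima; should one wish to avoid assuming attainment, the identical conclusion would follow from monotonicity of the infimum under the same two pointwise inequalities.
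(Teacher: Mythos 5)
Your proof is correct and takes essentially the same route as the paper's: both transfer the uniform bound of Theorem~\ref{thm:error1} pointwise through the composition with $h_{K+Q}(\cdot,t)$ and then evaluate at the two minimizers $\hat{\boldsymbol{\Omega}}$ and $\boldsymbol{\Omega}^*$ to obtain the two one-sided estimates. The differences are purely cosmetic --- the order in which each chain of inequalities is written and your closing remark on using infima instead of attained minima.
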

\begin{proof}
Let ${\mathbf{w}}^*=e^{j\boldsymbol{\Omega}^*}$ and $\hat{\mathbf{w}}=e^{j\hat{\boldsymbol{\Omega}}}$. By Theorem~\ref{thm:error1} we have
\begin{equation}
\begin{aligned}
F_\lambda(t)=&C_\lambda^u(M_{K+Q})(h_{K+Q}(\hat{\mathbf{w}},t))
\geq M_{K+Q}(h_{K+Q}(\hat{\mathbf{w}},t))\\ \geq &M_{K+Q}(h_{K+Q}(\mathbf{w}^*,t)) = F(t).
\end{aligned}
\end{equation}
Similarly, we obtain
\begin{equation}
\begin{aligned}
F(t) = &M_{K+Q}(h_{K+Q}(\mathbf{w}^*,t)) \\
\geq &C_\lambda^u(M_{K+Q})(h_{K+Q}({\mathbf{w}^*},t)) - \frac{1}{4\lambda}
\\ \geq& C_\lambda^u(M_{K+Q})(h_{K+Q}(\hat{\mathbf{w}},t)) - \frac{1}{4\lambda} \\
=& F_\lambda(t) - \frac{1}{4\lambda}.
\end{aligned}
\end{equation}
This completes the proof.
\end{proof}
While employing the bisection method to find the root of $F$, $F_\lambda$ instead of $F$ is evaluated in each iteration. It satisfies that if $F_\lambda(t)<0 $, then $F(t)<0$, and if $F_\lambda(t)>\frac{1}{4 \lambda}$, then $F(t)>0$.
Definitely, $F_{\lambda}$ has a root for sufficiently large $\lambda$. However, we do not know for how large $\lambda$ such that $F_{\lambda}(t)$ has a root. Therefore, we adaptively increase $\lambda$ in our proposed algorithm. The algorithm to find the root of $F$ and the optimum of the original problem (P0) in~\eqref{MM0} is summarized in Algorithm~\ref{alg-1.0}.

\begin{remark} 
The proposed algorithm offers a general solution framework for constrained Max-min optimization problems. It operates by integrating the original objective function with the constraint conditions to construct a surrogate function and employs a bisection method for efficient solving. During the bisection optimization process, the algorithm evaluates an equivalent function $F(t)$, where the non-smooth maximum function in the Max-min problem is approximated using the compensated convexity (upper) transform as a proximal operator. This approximation enables the use of gradient-based methods to efficiently solve the resulting smooth optimization problem and obtain the optimal solution. 
\end{remark}

The proposed algorithm is applicable under the condition that the objective and constraint functions exhibit a similar mathematical structure. It is independent of specific signal models and the number of constraints, thereby enabling broad applicability to various constrained Max-min optimization problems.

\subsection{Convergence Analysis and Computational Complexity}

\begin{thm}\label{thm:error2}
Given any $\delta >0$, the output $t$ of Algorithm~\ref{alg-1.0} satisfies either $\left|t-t^* \right| \leq \epsilon $ or $\left| F(t)\right|\leq \delta$, where $t^*$ denotes the optimum in the RHS of~\eqref{min-max}.
\end{thm}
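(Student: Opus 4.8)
The plan is to treat Algorithm~\ref{alg-1.0} as a bisection on $t$ that maintains a bracketing interval $[t_{\mathrm{lo}},t_{\mathrm{hi}}]$ guaranteed to contain the unique root $t^*$ of $F$, and to show that whichever way the loop exits, one of the two claimed inequalities must hold. The starting point is that, by Theorem~\ref{thm:mono_de}, $F$ is strictly decreasing, so $F(t)>0$ exactly when $t<t^*$ and $F(t)<0$ exactly when $t>t^*$. Hence knowing only the \emph{sign} of $F(t)$ suffices to decide on which side of $t^*$ the probe $t$ lies.

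First I would establish the bracketing invariant. The sign of $F(t)$ is read off from the surrogate $F_\lambda(t)$ through the sandwich $F_\lambda(t)-\tfrac{1}{4\lambda}\le F(t)\le F_\lambda(t)$ supplied by Theorem~\ref{thm:error_est}. This yields the two certified decisions used in the algorithm: $F_\lambda(t)<0$ forces $F(t)<0$, hence $t>t^*$, so the update $t_{\mathrm{hi}}\leftarrow t$ preserves $t^*\in[t_{\mathrm{lo}},t_{\mathrm{hi}}]$; and $F_\lambda(t)>\tfrac{1}{4\lambda}$ forces $F(t)>0$, hence $t<t^*$, so $t_{\mathrm{lo}}\leftarrow t$ likewise preserves the bracket. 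A straightforward induction over the iterations then gives $t^*\in[t_{\mathrm{lo}},t_{\mathrm{hi}}]$ at all times.

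Next I would analyze the undecided regime $0\le F_\lambda(t)\le\tfrac{1}{4\lambda}$, in which the sign of $F(t)$ cannot be certified. Here the same sandwich gives $-\tfrac{1}{4\lambda}\le F(t)\le\tfrac{1}{4\lambda}$, i.e. $|F(t)|\le\tfrac{1}{4\lambda}$. The algorithm responds by increasing $\lambda$ and recomputing at the same $t$; because $\tfrac{1}{4\lambda}\to 0$, after finitely many increases the guard $\tfrac{1}{4\lambda}\le\delta$ is met, at which point the algorithm returns the current $t$ with $|F(t)|\le\tfrac{1}{4\lambda}\le\delta$. Meanwhile, every sign-certified iteration halves $t_{\mathrm{hi}}-t_{\mathrm{lo}}$, so after at most $\lceil\log_2((t_{\mathrm{hi}}^{0}-t_{\mathrm{lo}}^{0})/\epsilon)\rceil$ such iterations the bracket width falls below $\epsilon$; since both the output $t$ and $t^*$ lie in $[t_{\mathrm{lo}},t_{\mathrm{hi}}]$, this gives $|t-t^*|\le t_{\mathrm{hi}}-t_{\mathrm{lo}}\le\epsilon$. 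Combining the two exit routes yields exactly the dichotomy in the statement.

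The main obstacle is controlling the interaction between the outer bisection on $t$ and the inner adaptive increase of $\lambda$: a priori one must rule out an infinite alternation in which the undecided regime keeps triggering $\lambda$-increases without ever narrowing the bracket. This is resolved by the guard $\tfrac{1}{4\lambda}\le\delta$, which caps the number of $\lambda$-doublings at any fixed probe $t$ at a finite value depending only on $\delta$ and the initial $\lambda$; so the inner loop always terminates, either by certifying a sign (and resuming the halving) or by returning with $|F(t)|\le\delta$. With the inner loop finite, the outer loop reduces to ordinary bisection and terminates in logarithmically many steps, which also recovers the $\mathcal{O}(\log L)$ complexity claimed earlier.
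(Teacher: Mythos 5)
Your proof is correct and follows essentially the same route as the paper's: a case split on how the algorithm exits, using the sandwich $F_\lambda(t)-\tfrac{1}{4\lambda}\le F(t)\le F_\lambda(t)$ from Theorem~\ref{thm:error_est} together with the guard $\lambda>\tfrac{1}{4\delta}$ to conclude $|F(t)|\le\delta$ at a break, and the bracketing of $t^*$ to conclude $|t-t^*|\le\epsilon$ when the interval width drops below $\epsilon$. The only differences are that you spell out the bracketing-invariant induction and the inner/outer termination argument that the paper merely asserts (``$t,t^*\in(a,b)$''), and that your sign-certification thresholds ($F_\lambda<0$ and $F_\lambda>\tfrac{1}{4\lambda}$) follow the paper's prose and its own proof of this theorem rather than the literal pseudocode conditions ($F_\lambda<-\tfrac{1}{4\lambda}$ and $F_\lambda>0$) in Algorithm~\ref{alg-1.0} --- an inconsistency internal to the paper, not a flaw in your argument.
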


\begin{proof}
If Algorithm~\ref{alg-1.0} terminates at line 11, then 
\begin{equation}
0<F_\lambda(t)<\frac{1}{4\lambda}.
\end{equation}
By \eqref{eq:error_bnd} we have 
\begin{equation}
-\frac{1}{4\lambda}<F(t)<\frac{1}{4\lambda}.
\end{equation} 
Since $\lambda > \frac{1}{4\delta}$ at line 10, we have $-\delta<F(t)<\delta$.

If Algorithm~\ref{alg-1.0} concludes when $|a-b|<\epsilon$, then $t,t^*\in(a,b)$ and $|t-t^*|<\epsilon$.
\end{proof}

\begin{remark} In case $\left| F(t)\right|\leq \delta$ at termination of Algorithm~\ref{alg-1.0}, the maximum violation of the constraints \{$g_q(\mathbf{w}) -\alpha_q\leq 0\}_{q=K+1}^{K+Q}$ is no more than $\delta$. Hence $\delta$ serves as the constraint tolerance for solving the constrained Max-min problem~\eqref{MM0}.
\end{remark}
\begin{algorithm}[!tbp]
\caption{Bisection-based method (BIS) for constrained Max-min problem~\eqref{MM0}}
\label{alg-1.0}
\begin{algorithmic}[1]
\Require $\mathbf{A}_k$, $ \mathbf{H}_q$, the threshold $\sigma_q$, initial parameter $\lambda_0$, tolerance $\epsilon$, $\delta$. 
\Ensure Optimum $\mathbf{\Omega}^*$, the root $t$  
\State Construct $F_\lambda(t)$ as in~\eqref{uncons} and initialize $a,b$, such that $F_{\lambda}(a) > 0$ and $F_\lambda(b) < -\frac{1}{4\lambda}$;
\State $\lambda \gets \lambda_0$;
\While{$|a - b| > \epsilon$}
    \State $t \gets a + \frac{b - a}{2}$;
    \State Solve the RHS of~\eqref{uncons} for optimum $\mathbf{\Omega}^*$ and evaluate $F_{\lambda}(t)$;
    \If{$F_{\lambda}(t) < -\frac{1}{4\lambda}$}
        \State $b \gets t$;
    \ElsIf{$F_{\lambda}(t) > 0$}
        \State $a \gets t$;
    \ElsIf{$\lambda > \frac{1}{4\delta}$}
     \State   \textbf{break};
    \Else
        \State $\lambda \gets 2\lambda$;
    \EndIf
\EndWhile
\State \textbf{return} $\mathbf{\Omega}^*$, $t$
\end{algorithmic}
\end{algorithm}

The computational complexity of the bisection search is $\mathcal{O}(\log L)$, where $L = a-b$ represents the size of the search space. As the algorithm converges, the functional value $t$ and corresponding optimal solution $\mathbf{\Omega}^*$ for the original constrained Max-min problem~\eqref{MM0} can be achieved.

\section{Numerical Simulation Evaluations}\label{Section4}
We now investigate the beam synthesis capabilities of transmissive RIS to demonstrate the effectiveness of the proposed Max-min framework. Initially, we validate the explicit beam suppression functionality, which is most overlooked in RIS-assisted wireless communications. Subsequently, we thoroughly evaluate the flexible multi-beam synthesis performance and analyze the impact of the beam constraint threshold constant on solution behavior. Finally, we compare the proposed method with two established algorithms: SDR-SDP~\cite{li2024weighted} and QuantRand~\cite{subhash2023max}, regarding beam pattern, received signal power, and computational efficiency. 

The SDR-SDP method is related to semi-definite relaxation and semi-definite programming techniques~\cite{wu2019intelligent}, it relaxes the uni-modular constraint on the transmission coefficient to rank-one restrictions on the transformed variable matrix. Thus, the problem becomes convex and can be solved using an interior point method. QuantRand is a quantized variant of random coordinate descent algorithm, employing a greedy search strategy.
For a comprehensive evaluation, an unconstraint beam synthesis approach is also considered as a benchmark, referred to as Non-Constraint. This method utilizes the MA algorithm~\cite{xiong2024fair}, which has already demonstrated its potential in fair beam allocation, to solve problem (P0) without imposing any directional suppression constraints. In this case, multi-beam synthesis is solely directed towards communication UEs, without any consideration of sidelobe control.
Unless otherwise specified, the general simulation parameters are summarized in Table~\ref{Common} to avoid redundant descriptions.

\begin{table}[tbp]
\caption{General Simulation Setup Parameters} \label{common}
\centering
\setlength{\tabcolsep}{1.2mm}
\setstretch{1.1} 
\begin{tabular}{ccc}
\toprule[1.5pt]
Description &  Symbol & Value  \\
\midrule
Operating frequency & $f$    & 5.8 GHz \\
Wavelength   &$\lambda$  & 0.0517 m\\
\multirowcell{2}{The number of units\\} &\multirowcell{2}{$N$\\}  & \multirowcell{2}{$1\times16$ (linear)\\$16\times16$ (planar) } \\ \\ 
Inter-unit spacing  &$d $  &  0.025 m  \\
Tx1 position  &($r^\text{t}_1,\theta^{\text t}_1, \phi^{\text t}_1$) &(5 m, $0^{\circ},0^{\circ}$)\\ 
Tx2 position   &($r^\text{t}_2,\theta^{\text t}_2, \phi^{\text t}_2$) & (5 m, $30^{\circ},120^{\circ}$)\\
Tx3 position   &($r^\text{t}_3,\theta^{\text t}_3, \phi^{\text t}_3$) &  (5 m, $ 40^{\circ},240^{\circ}$)\\
Transmit power at each Tx& $P_{\rm t}$   & 0 dBm \\
UE Distance &$r^{\text r}$ &30 m\\ 
\bottomrule[1.5pt] 
\label{Common}
\end{tabular}
\end{table}

\subsection{Directional Suppression}

We begin by evaluating the beam suppression performance using a linear transmissive RIS consisting of 16 units. For simplified notation, we denote the elevation angle range from $-90^{\circ}$ to $90^{\circ}$ in such two-dimensional observation planes, which applies similarly in the following sections. A single transmitter (Tx1) is employed, and two communication UEs are positioned at directions $-20^{\circ}$ and $30^{\circ}$ with equal weight factors. Additionally, a potential eavesdropping or interference threat is assumed within the angular interval $[-10^{\circ}, 0^{\circ}]$. The cross-sections of the beams at distance $r^\text{r} = 30$ m, as a function of the observation angle $\theta$, are illustrated in Fig.~\ref{BeamSuppress}.
\begin{figure}[!htbp]
  \centering
  \subfigure[]{
  \label{beamsuppressed}
  \includegraphics[width=0.7\linewidth]{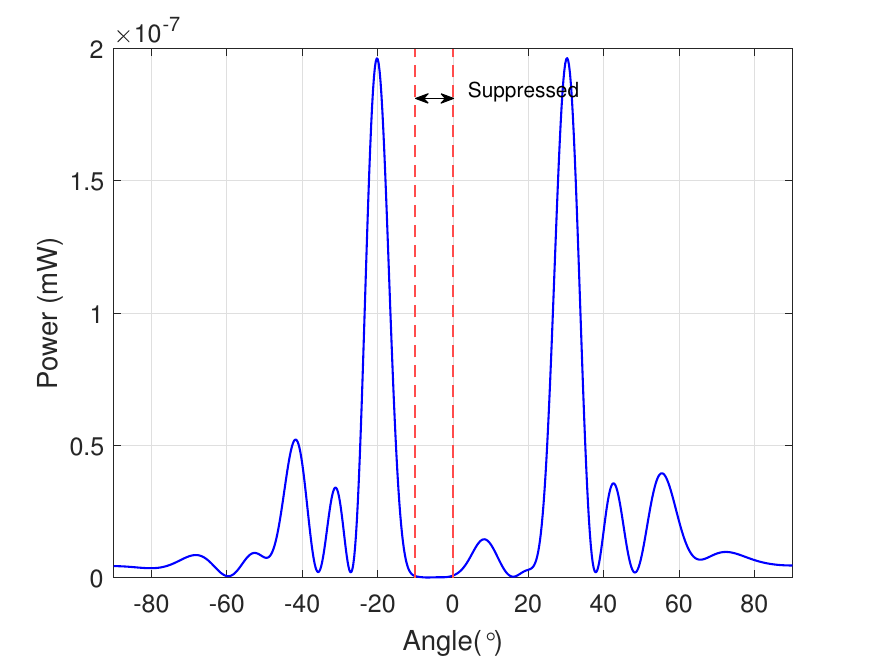}}
  \subfigure[]{
  \label{beamsuppressedpolar}
  \includegraphics[width=0.8\linewidth]{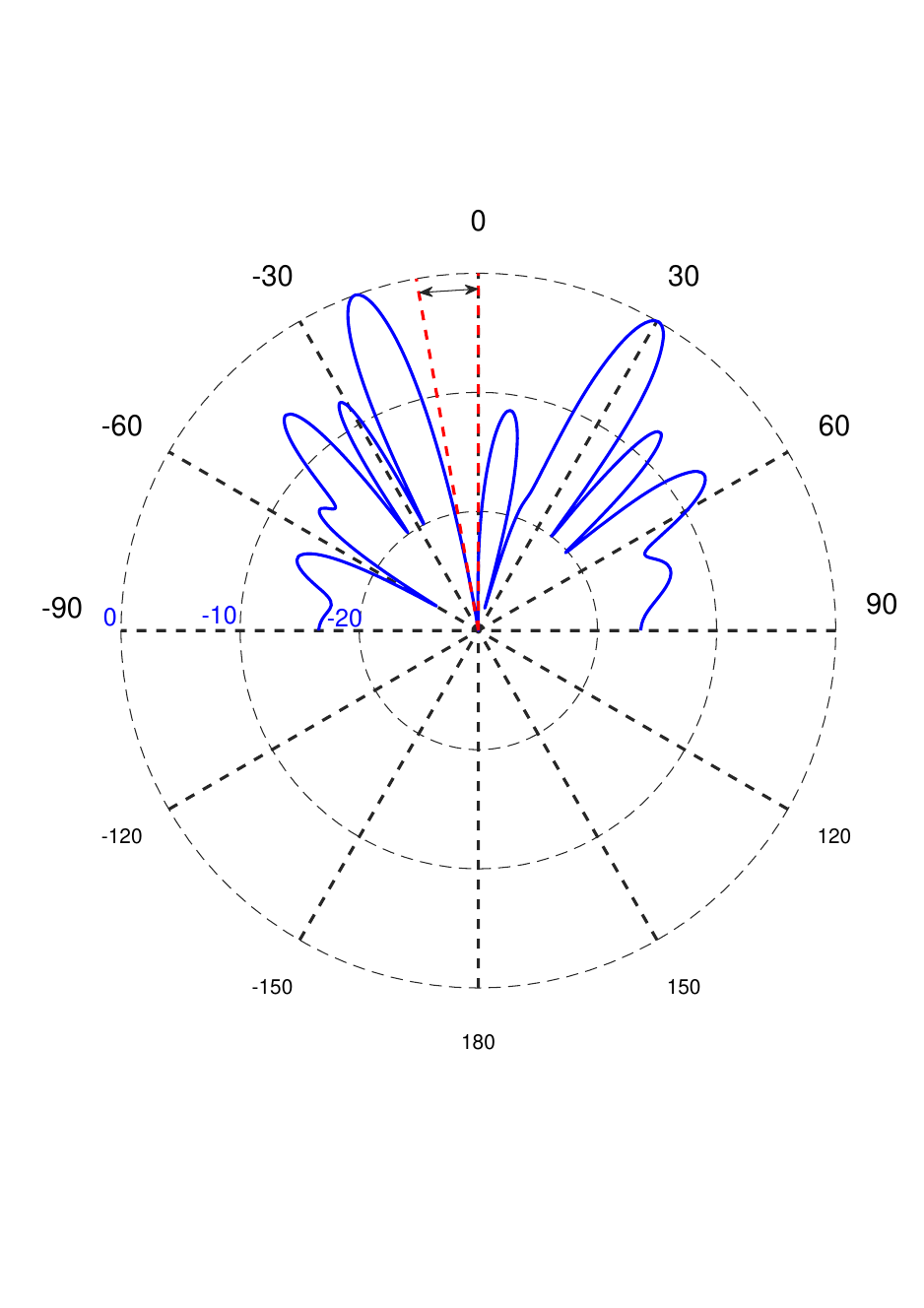}}
  \caption{The radiation pattern for beam synthesis illustration. A single signal source (Tx1) is employed; the observation UE angle is set to be $-20^{\circ}$ and $30^{\circ}$. The suppression interval is $[-10^{\circ},0^{\circ}]$, bounded by two red dashed lines. (a) The power plot. (b) The normalized polar power plot.} 
  \label{BeamSuppress}
\end{figure}
The results in Fig.~\ref{beamsuppressed} indicate that, in addition to forming high-gain directional beams at the target UE directions, the signal energy within the interval from $-10^{\circ}$ to $0^{\circ}$ is completely suppressed as intended. This suppression effect is more clearly observed in the normalized polar plot shown in Fig.~\ref{beamsuppressedpolar}, where a distinct beam null appears in the suppression region.

Furthermore, a planar RIS array and three sources, each positioned at a uniform distance of 5 meters and with orientations of ($0^{\circ}, 0^{\circ}$), ($30^{\circ}, 120^{\circ}$), and ($40^{\circ}, 240^{\circ}$), are employed to repeat the experiment. In this test, the directions of the two target communication UEs are adjusted to ($120^{\circ}, 0^{\circ}$) and ($160^{\circ}, 180^{\circ}$), with a separation of 30 meters from the RIS. We plot the normalized three-dimensional beam radiation pattern. The results in Fig.~\ref{16-16CON-3D} indicate that, in contrast to the unconstraint beam synthesis shown in Fig.~\ref{16-16NON-CON-3D}, the proposed approach effectively controlled and suppressed the directional beams that could potentially refer to unauthorized users. It should be noted that, due to energy conservation, achieving beam suppression in all sidelobe directions simultaneously is not feasible.

\begin{figure}[tbp]
  \centering
  \subfigure[]{
  \label{16-16NON-CON-3D}
  \includegraphics[width=0.475\linewidth]{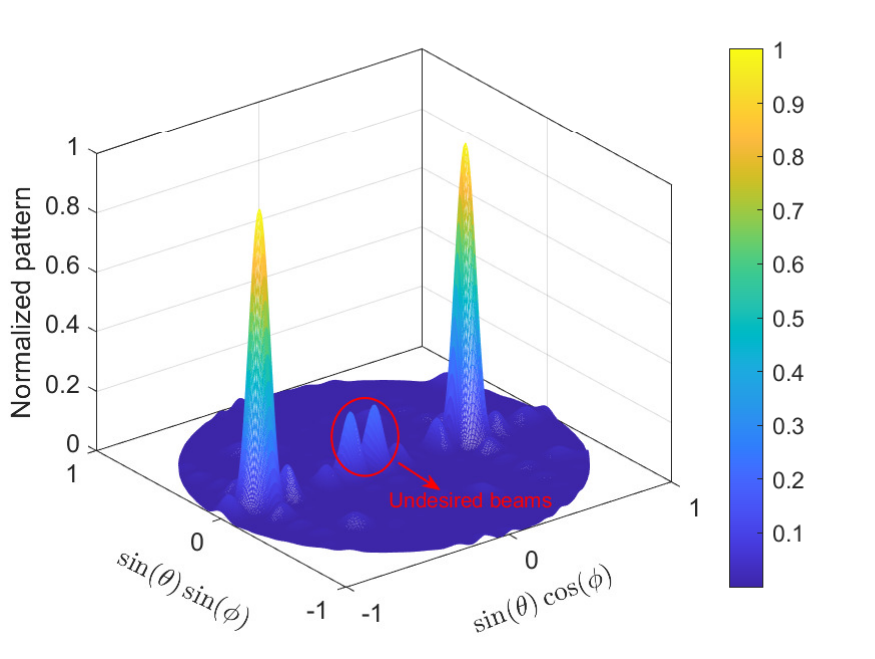}}
  \subfigure[]{
  \label{16-16CON-3D}
  \includegraphics[width=0.475\linewidth]{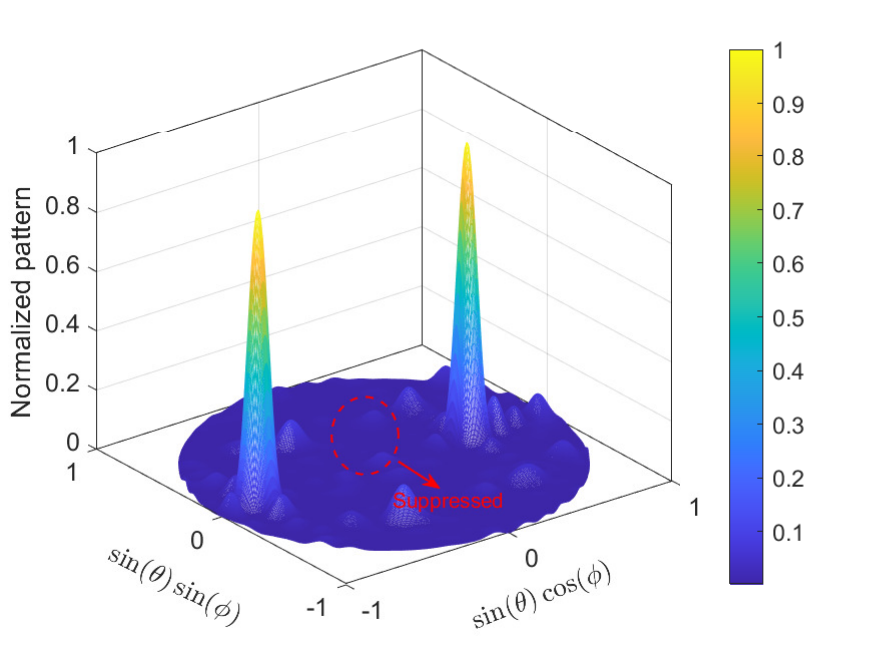}}
  \caption{The normalized 3-dimensional (3D) beam radiation pattern. (a) The unconstraint radiation pattern generated by the MA algorithm~\cite{xiong2024fair}. (b) The sidelobe beams are suppressed within the elevation angle range of $[0^{\circ}, 20^{\circ}]$ and the azimuth angle range of $[179^{\circ}, 181^{\circ}]$ through the proposed method.}
  \label{3D}
\end{figure}
\begin{figure}[!htbp]
  \centering
  \subfigure[]{
  \label{BeamsplitingPolar}
  \includegraphics[width=0.78\linewidth]{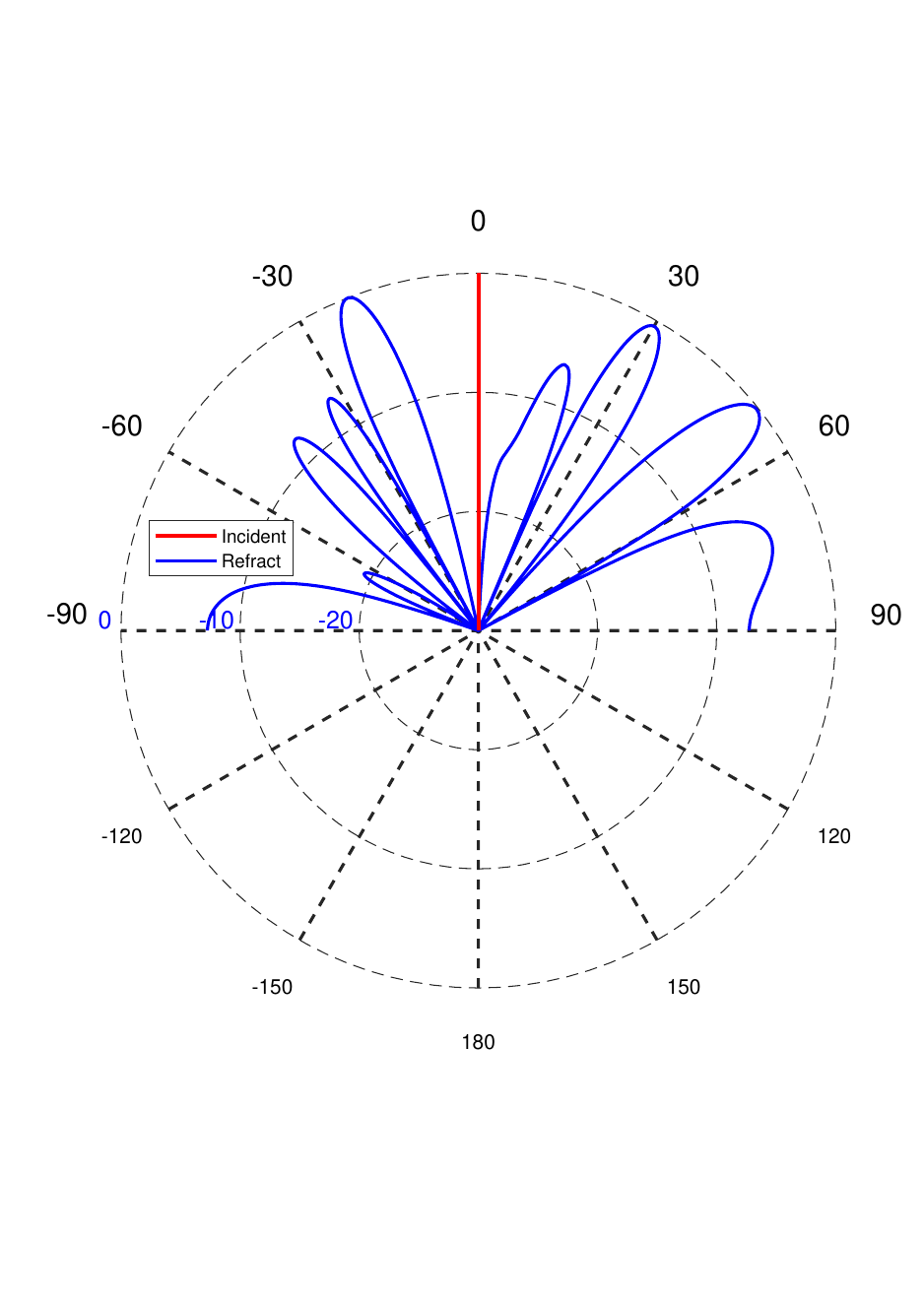}}
  \subfigure[]{
  \label{BeamAggregation-Polar}
  \includegraphics[width=0.78\linewidth]{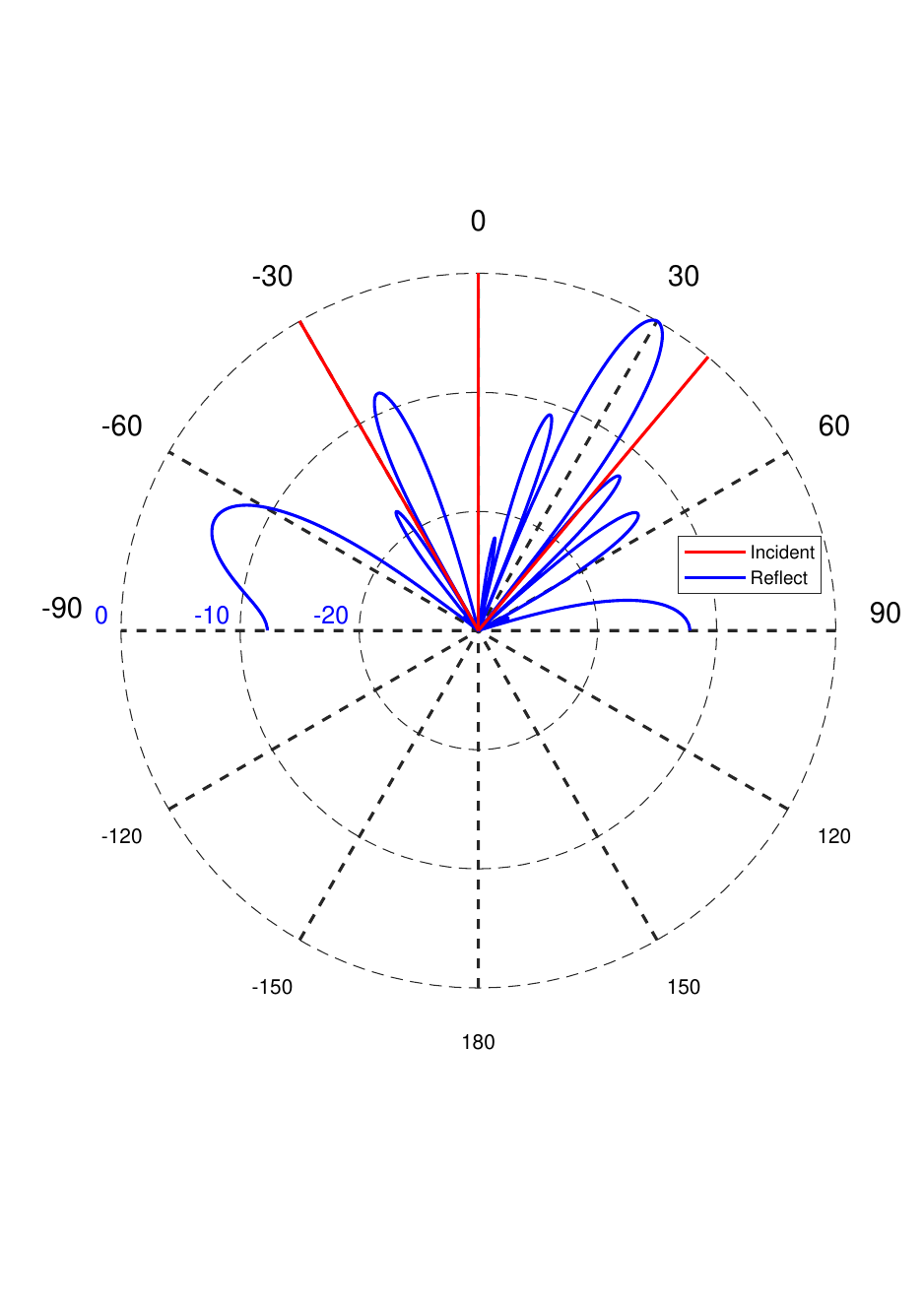}}
  \caption{Beam splitting and aggregation under the constraint of beam suppression within the angular range of $[-10^\circ, 0^\circ]$. (a) A single incident signal is split for three directional UEs. (b) Three incident signals are aggregated into a focused beam.}
  \label{BeamFunc}
\end{figure}

\subsection{Flexible Beam Synthesis}

\subsubsection{Beam-spliting and beam-aggregation}
Building upon the directional suppression, we further evaluate the flexible beam synthesis capability including beam-splitting and beam-aggregation, as shown in Fig.~\ref{BeamFunc}. While maintaining the beam null within the angular range of $-10^{\circ}$ to $0^{\circ}$, the proposed method is capable of various forms of beam manipulation, including splitting a single incident beam into multiple beams for multi-user communication and aggregating multiple incident beams into a single focused beam to achieve high directional signal enhancement. Additionally, the proposed method can also be utilized to implement widebeam through synthesizing multiple sub-beams, as discussed in our previous work~\cite{xiong2024fair}. Furthermore, the beamwidth narrows as the number of units increases, as we have also discussed. 

\subsubsection{Flexible beam synthesis}
In addition to beam suppression, the proposed framework also enables flexible beam control across different UE directions by adjusting the weight factors in~\eqref{MM0}. As demonstrated in Fig.~\ref{BeamSuppress-12} and Fig.~\ref{BeamSuppress-124}, while maintaining the beam null within the designated suppression region of $[-10^\circ, 0^\circ]$, we successfully acheive beam energy synthesis in a 2:1 ratio for two UE directions at $\{-20^{\circ}, 20^{\circ}\}$, and in a 4:2:1 ratio for three different UE directions at $\{-50^{\circ}, -20^{\circ},0^{\circ}\}$. This functionality allows us to precisely control the desired spatial energy distribution, whether for signal enhancement or suppression.


\begin{figure}[!htbp]
  \centering
  \subfigure[]{
  \label{BeamSuppressed-12}
  \includegraphics[width=0.68\linewidth]{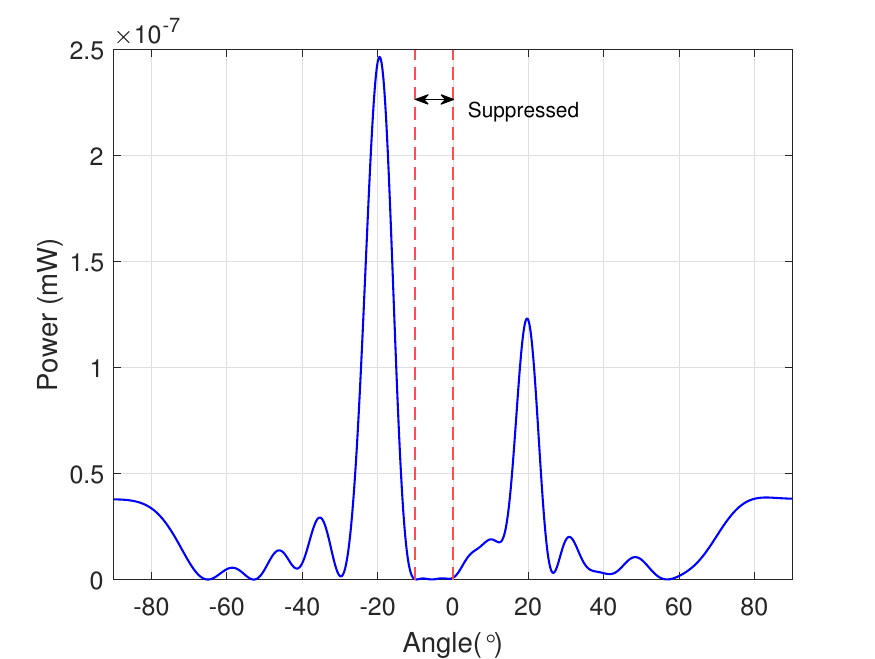}}
  \subfigure[]{
  \label{BeamSuppressed-Polar-12}
  \includegraphics[width=0.78\linewidth]{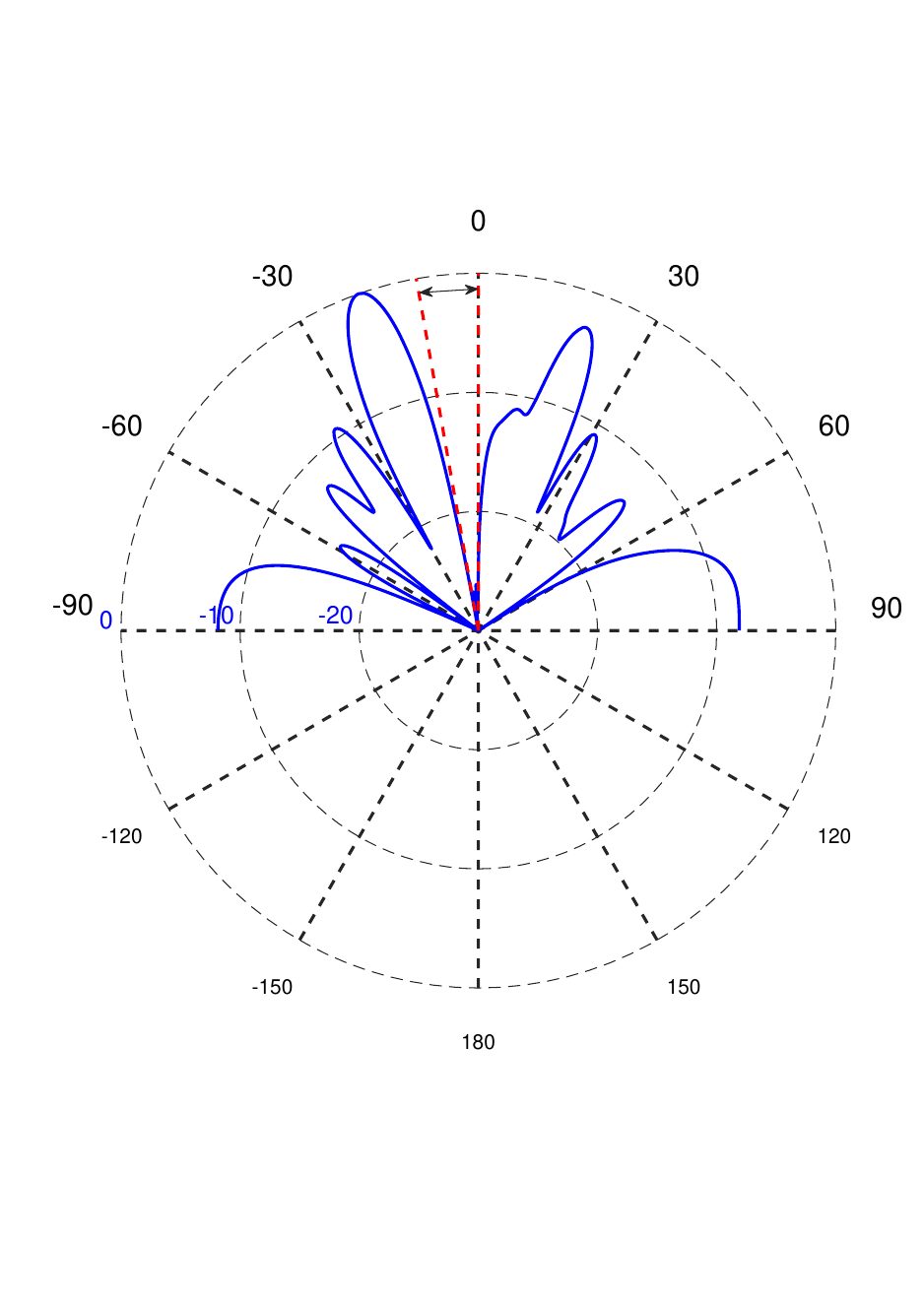}}
  \caption{Flexible beam synthesis with a single incident signal. The suppression interval bounded by two red dashed lines is $[-10^{\circ},0^{\circ}]$. UE $1$ is positioned at $-20^{\circ}$, and UE $2$ is positioned at $20^{\circ}$, with corresponding weight factors $\alpha_1 = 2,\ \alpha_2 = 1$. (a) The radiation pattern. (b) The normalized radiation pattern in a polar plot.}
  \label{BeamSuppress-12}
\end{figure}

\begin{figure}[!htbp]
  \centering
  \includegraphics[width=0.85\linewidth]{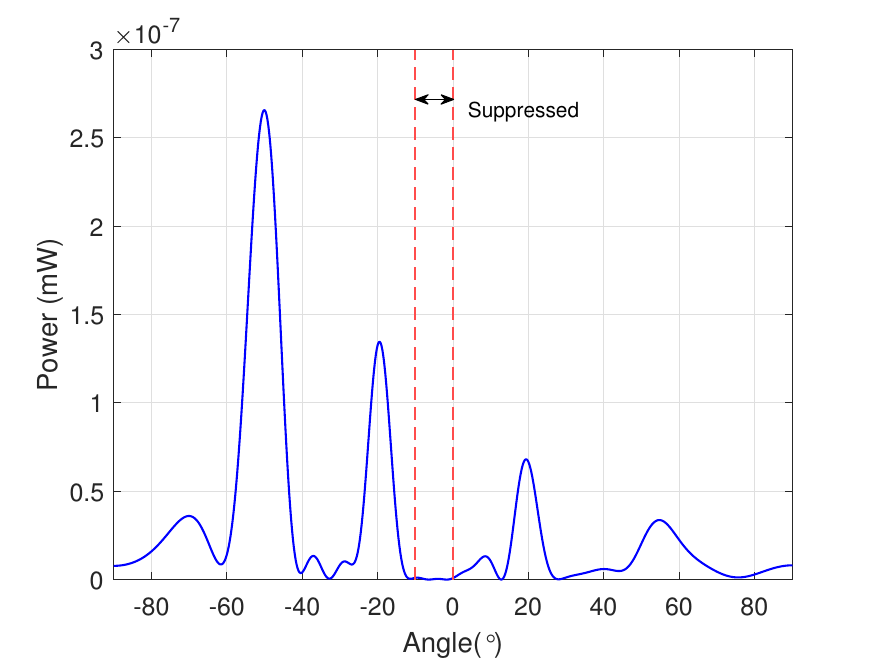}
  \caption{Flexible beam synthesis with a single incident signal. The three communication UEs are positioned at directions $-50^{\circ}$, $-20^{\circ}$, and $20^{\circ}$, with corresponding weight factors of $\alpha_1 = 4$, $\alpha_2 = 2$, and $\alpha_3 = 1$, respectively.}
  \label{BeamSuppress-124}
\end{figure}

\subsection{The Impact of Constraint Threshold on Beam Pattern}
In the process of beam synthesis, the amplitude of the threshold $\sigma_q$ in problem (P0) impacts system performance by altering the size of the feasible space, as we discussed in Section~\ref{Section2}. To evaluate this effect, we conduct a comparison experiment. Specifically, we first obtain the radiation pattern of the linear RIS without sidelobe constraints and record the power values at two observation positions with equal weight, labeling them as `$\text{Peak}$'. Then, we assign different $\sigma_q$ values based on varying proportions of `$\text{Peak}$' in the suppression region, and observe the resulting impact on the radiation pattern. The suppression interval here is set to $[-34.5^{\circ},-19.5^{\circ}]$.

The results in Fig.~\ref{CT} indicate that smaller threshold $\sigma_q$ values correspond to a more limited feasible solution space, leading to more pronounced suppression of sidelobe beams. In the target sidelobe region, the suppression effect with $\sigma_q = 0.01 \times \text{Peak}$ shows an average gain of 20 dB compared to $\sigma_q = 0.5 \times \text{Peak}$, and achieves 25 dB gains compared to the Non-Constraint scheme. It is worth noting that the introduction of the sidelobe constraint (related to the unauthorized UE direction) results in a slight reduction of the received signal for the communication UEs in the main beam direction. Compared to the Non-Constraint case, the power loss for the target communication UEs with $\sigma_q = 0.01 \times \text{Peak}$ is approximately 0.7 dB. These comprehensive results demonstrate that the proposed framework can achieve substantial sidelobe suppression gains with only a marginal reduction in the main beam.

\begin{figure}[!htbp]
  \centering
  \subfigure[]{
  \label{CompareThreshold}
  \includegraphics[width=0.75\linewidth]{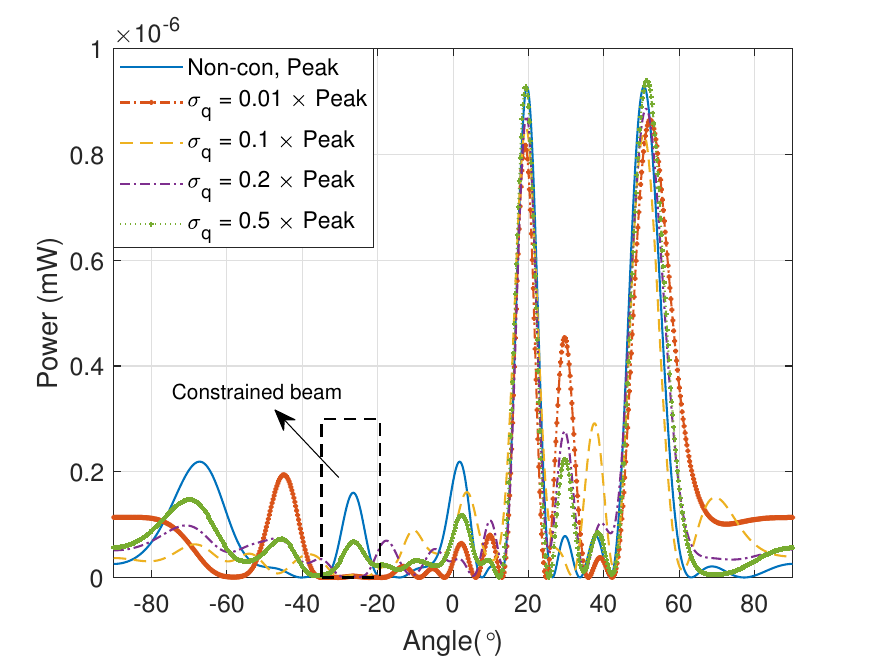}}
  \subfigure[]{
  \label{Compare-Threshold-dB}
  \includegraphics[width=0.75\linewidth]{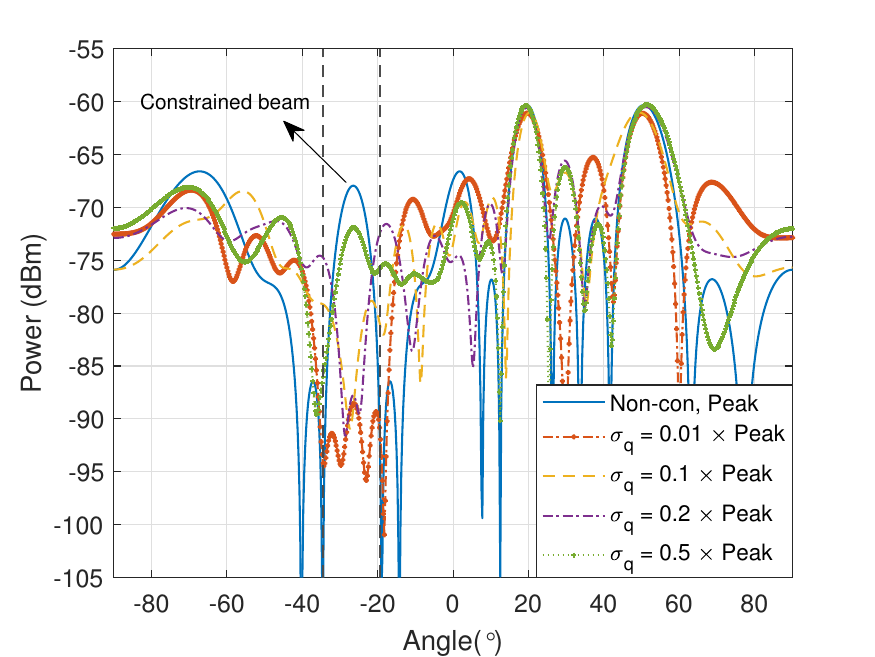}}
  \caption{Impact of different $\sigma_q$, while `Non-con' is related to the Non-Constraint scheme. Three transmitters and a linear RIS are employed, and the target UE directions are $20^{\circ}$ and $50^{\circ}$. The suppressed region is set to $[-36^{\circ},-20^{\circ}]$.}
  \label{CT}
\end{figure}

\subsection{Beam Performance Comparison}\label{4-Beam}
To further evaluate the beam performance of the proposed algorithm, we compare it against two existing methods, including SDR-SDP~\cite{li2024weighted} and QuantRand~\cite{subhash2023max} methods. Additionally, we present the outcomes of the Non-Constraint case as a baseline. In this experiment, the linear transmissive RIS employed is illustrated by a single source, Tx1. The suppressed region is set from $-36^{\circ}$ to $-20^{\circ}$, and the weight factors for the two served communication UEs at $20^{\circ}$ and $50^{\circ}$ are set to be equal. We present the cross-sections of the beams at a distance $r^\text{r} = 30$ m, as a function of the observation angle $\theta$, as shown in Fig.~\ref{CompareBeampattern}.

It can be observed that the proposed BIS algorithm achieves effective beam energy suppression in the selected region while maintaining a main beam power level comparable to the Non-Constraint baseline. In contrast, the SDR-SDP method not only falls short in terms of main beam power level but also delivers suboptimal suppression results. Furthermore, while the QuantRand algorithm exhibits commendable suppression performance to some extent, its poor performance in the energy distribution of the two main beams indicates that it is not a reasonable solution. 

For completeness, we vary the weight factors and repeat the tests, with the corresponding results and performance metrics associated with various algorithms summarized in Table~\ref{Metrics}. It can be concluded that the proposed BIS algorithm exhibits superior performance in beam suppression and controlling the signal power ratio compared to existing methods. These results demonstrate that the proposed algorithm simultaneously achieves both the main beam control and directional sidelobe suppression, which none of the existing work can accomplish.

\begin{figure}[htbp]
  \centering
  \includegraphics[width=0.85\linewidth]{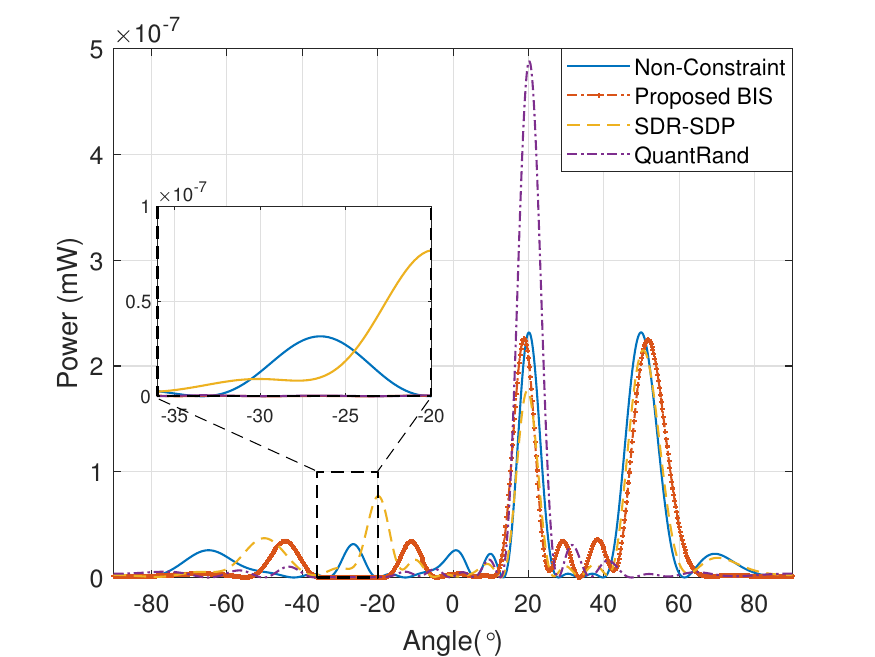}
  \caption{Beampattern comparison of various methods. A single transmitter is employed. The suppressed region is set to $[-36^{\circ},-20^{\circ}]$.}
  \label{CompareBeampattern}
\end{figure}

\begin{table*}[!htbp]
\caption{Metrics Comparisons, the Suppressed angular interval is ($\theta_{\text {sup}}$ = $[-36^{\circ},-20^{\circ}]$)} \label{Variance}
\centering
\setlength{\tabcolsep}{1.2mm}
\setstretch{1.1} 
\begin{tabular}{ccccc}
\toprule[1.5pt]
Weight \& Constraint& Methods &Actual power ratio  & Variance & Peak power within the suppressions \\
\multirowcell{4}{$\alpha_1:\alpha_2=1:1$ \\(power ratio 1:1)}& \multirowcell{4}{Non-Contraint\\ \textbf{Proposed BIS} \\ QuantRand \\SDR-SDP   } & \multirowcell{4}{ 1:1\\ \textbf{1:1}\\64.21:1\\1.62:1 }&  \multirowcell{4}{0 \\ \textbf{0}\\1997.5 \\0.1947} & \multirowcell{4}{ -75.01 dBm\\\textbf{-82.06} dBm\\ -84.06 dBm\\-70.12 dBm}      \\ \\ \\ \\
\midrule
\multirowcell{4}{$\alpha_1:\alpha_2=1:2$  \\(power ratio 1:2)}& \multirowcell{4}{Non-Contraint\\ \textbf{Proposed BIS} \\ QuantRand \\SDR-SDP   } & \multirowcell{4}{1:2\\ \textbf{1:2} \\163.96:2\\1:2.19 }&  \multirowcell{4}{0 \\ \textbf{0}\\13277.98 \\0.0181} & \multirowcell{4}{-75.6266 dBm\\ \textbf{-94.6233} dBm\\-92.4337 dBm\\-72.8209 dBm}      \\ \\ \\ \\
\bottomrule[1.5pt] 
\label{Metrics}
\end{tabular}
\end{table*}

\subsection{Evaluation on the Received Signal Power}
We proceed to analyze the received signal power at the target communication UEs and the observation points (unauthorized UEs) within the suppression region. In this context, two primary parameters are considered. The first is the minimum received signal power among all communication UEs (Min-UE), which serves as a critical metric for evaluating the effectiveness of different algorithms in ensuring overall communication quality in the system. The second parameter is the maximum received signal power at the sample points within the suppression region (Max-SP), used to verify whether the imposed sidelobe power constraints are satisfied. This evaluation employs a planar RIS array and three Tx as described in Table~\ref{Common}, with three communication UEs, each assigned equal weight factors, located at $(20^{\circ},0^{\circ}),(50^{\circ},0^{\circ})$, and $(50^{\circ},180^{\circ})$. The suppression region is defined as $\theta = [20^{\circ}, 36^{\circ}]$, $\phi = [179^{\circ}, 181^{\circ}]$. All observation points and UEs are uniformly positioned at a distance of 30 m from the RIS.  
\begin{figure}[!htbp]
  \centering
  \subfigure[]{
  \label{CDF-UE-maximum}
  \includegraphics[width=0.75\linewidth]{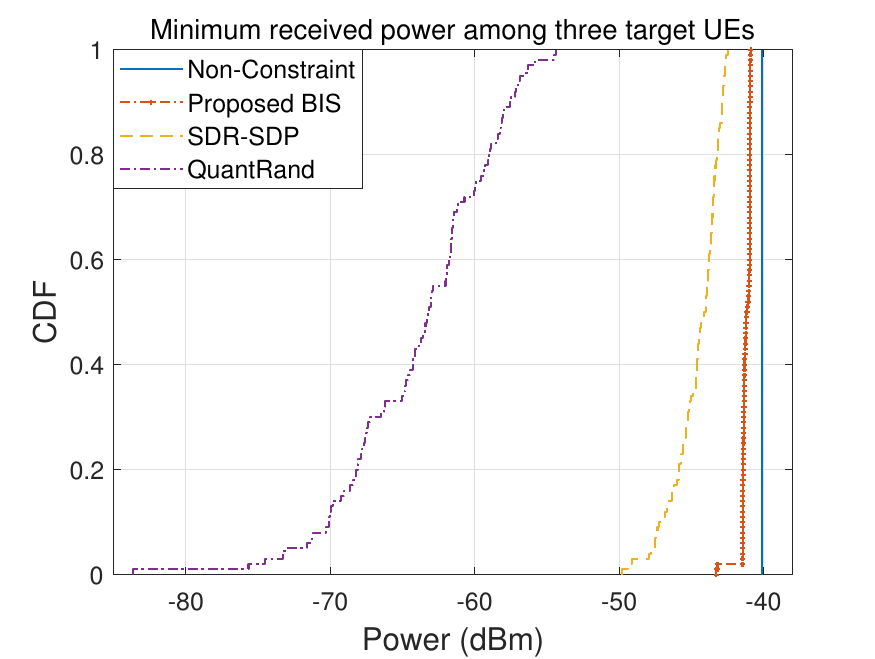}}
  \subfigure[]{
  \label{CDF-SUP-minimum}
  \includegraphics[width=0.75\linewidth]{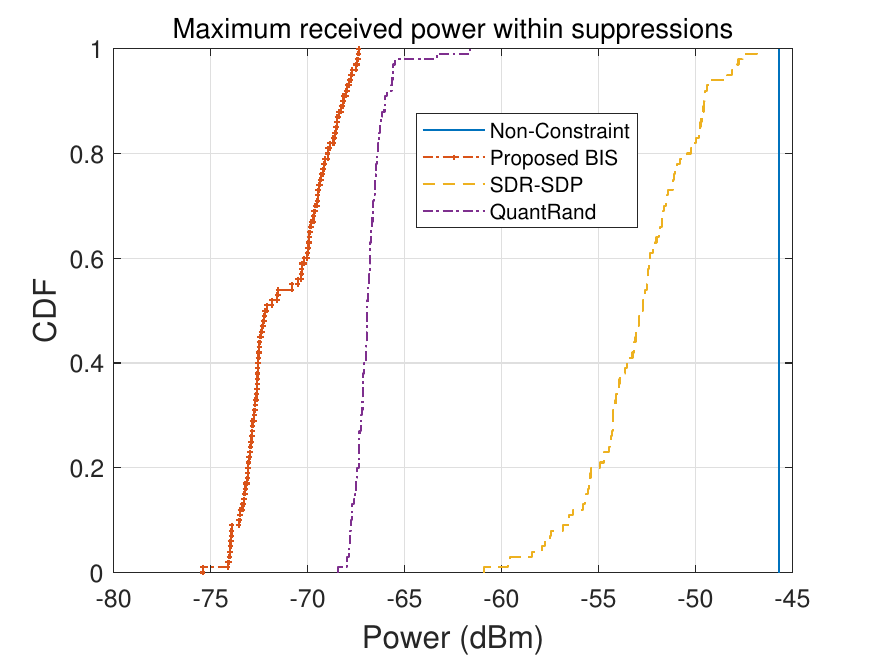}}
  \caption{Cumulative Distribution Function of the received signal power at different observation points. (a) The minimum across all target UEs (Min-UE). (b) The maximum of all suppressed points (Max-SP).}
  \label{CDF}
\end{figure}

As shown in Fig.~\ref{CDF}, the cumulative distribution function (CDF) illustrates that the proposed BIS outperforms the other two algorithms in both signal enhancement and suppression performance. Specifically, with reference to the Non-Constraint baseline, the BIS algorithm achieves a reduction of approximately 27 dB in the maximization power, across all assumed unauthorized UEs within the suppression region, while incurring only a 1.2 dB loss in the power received by the target communication UEs. In contrast, the SDR-SDP method exhibits a loss and gain of 4.4 dB and 7 dB, respectively, while the QuantRand algorithm incurs a power loss of 22 dB with a suppression gain of only 11 dB.

\subsection{Analysis of the Relative Gain Performance}

For a clearer demonstration of the performance gains achieved by the proposed methods, we define a metric referred to as Relative Gain (RG), calculated by 
\begin{equation*}
\text{RG} = 10 \log_{10} (P_{\text{UE},min}/P_{\text{SP},max}),
\end{equation*}
where $P_{\text{UE}, min}$ denotes the minimum received power among all communication UEs, and $P_{\text{SP}, max}$ represents the maximum power at the observation points associated with potential unauthorized UEs. This metric offers insights for analyzing the proposed algorithm's performance in physical layer security and interference mitigation applications.

We conduct two sets of experiments: the first focusing on the impact of the number of RIS units on RG performance, while the second examines the influence of the number of users. In the first test, we keep the three communication UEs at $(20^{\circ}, 0^{\circ})$, $(50^{\circ}, 0^{\circ})$, and $(50^{\circ}, 180^{\circ})$, with the suppressed region set to $[-36^{\circ}, -20^{\circ}]$. The number of units $N$ on the linear RIS is incrementally increased. In the second test, the number of RIS units is fixed at $N = 16$, while the number of UEs $K$ is gradually increased. The RG performance of various methods is evaluated across these scenarios. To ensure robustness, we conduct independent 100 trials for each value of $N$ or $K$ and calculate the average values. 
\begin{figure}[!htbp]
  \centering
  \subfigure[]{
  \label{RG-N}
  \includegraphics[width=0.75\linewidth]{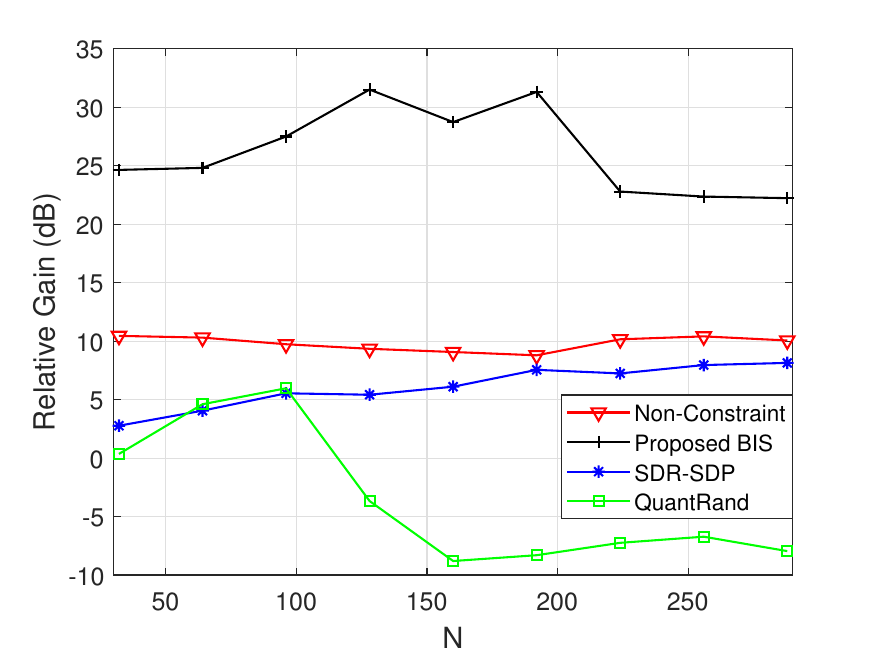}}
  \subfigure[]{
  \label{RG-K}
  \includegraphics[width=0.75\linewidth]{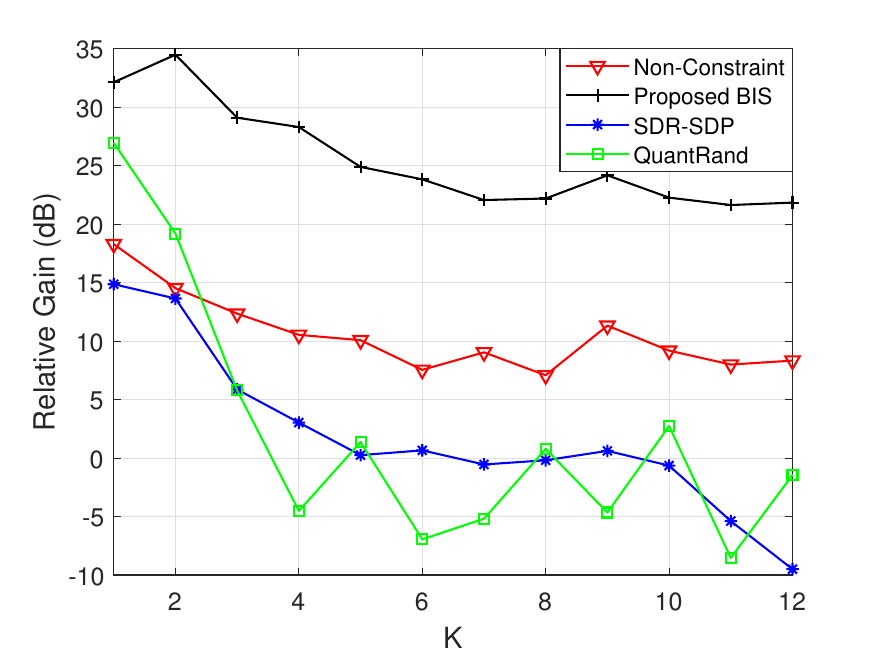}}
  \caption{Comparison across different methods. Three Tx are employed, the suppressed angular range is set to $[-36^{\circ},-20^{\circ}]$. (a) RG performance versus the number of RIS units. (b) RG performance versus the number of UEs.}
  \label{Compare}
\end{figure}


The results in Fig.~\ref{RG-N} indicate that, when using the Non-Constraint scheme, which focuses on multi-user power allocation without considering sidelobe energy constraints, as a baseline, the RG performance of this standard scheme remains relatively stable at 10 dB, regardless of the increase in the number of units. This suggests a dynamic equilibrium between the mainbeam and sidelobe energy: as the number of RIS units grows, both the mainbeam power level and sidelobe energy increase proportionally. Consequently, while deploying more RIS units can enhance signal and communication quality in multi-user scenarios, it also introduces the potential for increased interference or information leakage in unintended directions. This trade-off highlights the importance of incorporating sidelobe constraints in RIS optimization to mitigate adverse effects in practical applications.

Focusing on sidelobe energy, we observe that SDR-SDP and QuantRand exhibit even poorer RG performance than Non-Constraint, particularly when the number of units exceeds 120. In this context, the RG achieved by the QuantRand algorithm falls below zero, indicating that the beam energy directed towards the interference/eavesdropping direction exceeds that of the intended communication UEs.
In contrast, the proposed BIS algorithm achieves an RG performance of up to 25 dB. This substantial difference between the main beam and the suppressed sidelobe power level ensures its capability to prevent signal interference and information leakage in large-scale RIS deployments. 

A similar trend is observed when varying the number of users $K$, as depicted in Fig.~\ref{RG-K}. The proposed method's RG performance consistently surpasses that of the comparative algorithms. Notably, as the number of users increases, a gradual decline in RG is observed across all methods. This is due to the equal-weight allocation scheme, wherein the signal power received by each user decreases proportionally as the total number of users increases. Nevertheless, the proposed method still achieves an RG of 22 dB when the number of communication UEs reaches 12, demonstrating its remarkable robustness.

It is noteworthy that the performance curves of certain baseline algorithms, such as QuantRand, exhibit non-monotonic fluctuations as the number of users $K$ increases. This phenomenon may be attributed to: 1) the presence of significant high-variance characteristics during the optimization process, , as discussed in relation to the results in Table~\ref{Variance}; and 2) the inherent sensitivity of the RG metric to variations in user count, wherein the algorithm's robustness becomes increasingly influenced by the intermediate quantity $P_{\text{UE}, min}$ as $K$ grows. These findings substantiate the superior performance and stability of the proposed method in multi-user scenarios.

\begin{figure}[htbp]
  \centering
  \includegraphics[width=0.85\linewidth]{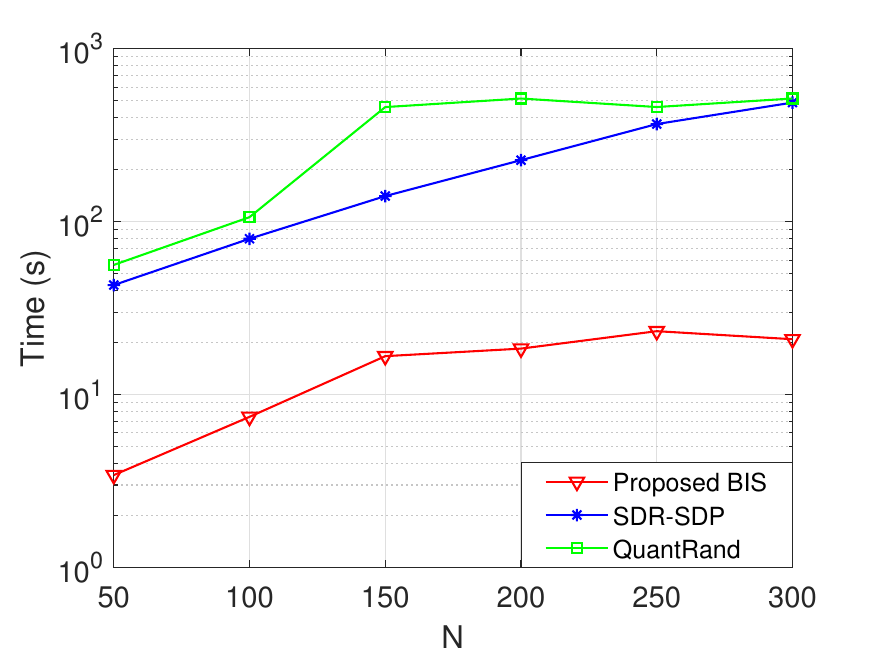}
  \caption{The overall processing time comparison across various algorithms.}
  \label{5-Time-N}
\end{figure}

\subsection{Time Complexity}

To evaluate the computational efficiency of the proposed BIS algorithm, this section tests the runtime performance of different methods on a computing platform equipped with an ${\rm \text{Intel}^{\text R} \text{Core}^{\text{TM}}}$ i7-14700KF processor (3.40 GHz, 64 GB RAM) running MATLAB R2023a. The evaluation considers varying numbers of RIS units $N$. For each value of $N$, 20 trials are conducted, and the average runtime is recorded and plotted. The results shown in Fig.~\ref{5-Time-N} indicate that the computational complexity of all methods increases with the number of units. Compared with the existing SDR-SDP and QuantRand approaches, the proposed BIS method consistently demonstrates the lowest overall computational complexity across different values of $N$.

\section{Prototype Measurements}\label{Section5}
To evaluate the effectiveness of the proposed method in real-world environments, we conduct comparative experiments in an anechoic chamber using a 1-bit RIS composed of $16 \times 16$ units~\cite{zhang2024design}. The experimental setup is illustrated in Fig.~\ref{Darkroom}, where the transmissive RIS and the transmitting horn antenna are co-mounted on a rotating platform, with the horn antenna positioned 0.6 m directly behind the RIS. The electromagnetic wave propagates through the transmissive RIS, is redirected, and subsequently received by another horn antenna (Rx) fixed at a distance of 5.0 m from the RIS. The Rx horn is connected to a vector network analyzer, which measures and records the received signal power within the horizontal plane.
\begin{figure}[htbp]
  \centering
  \includegraphics[width=0.95\linewidth]{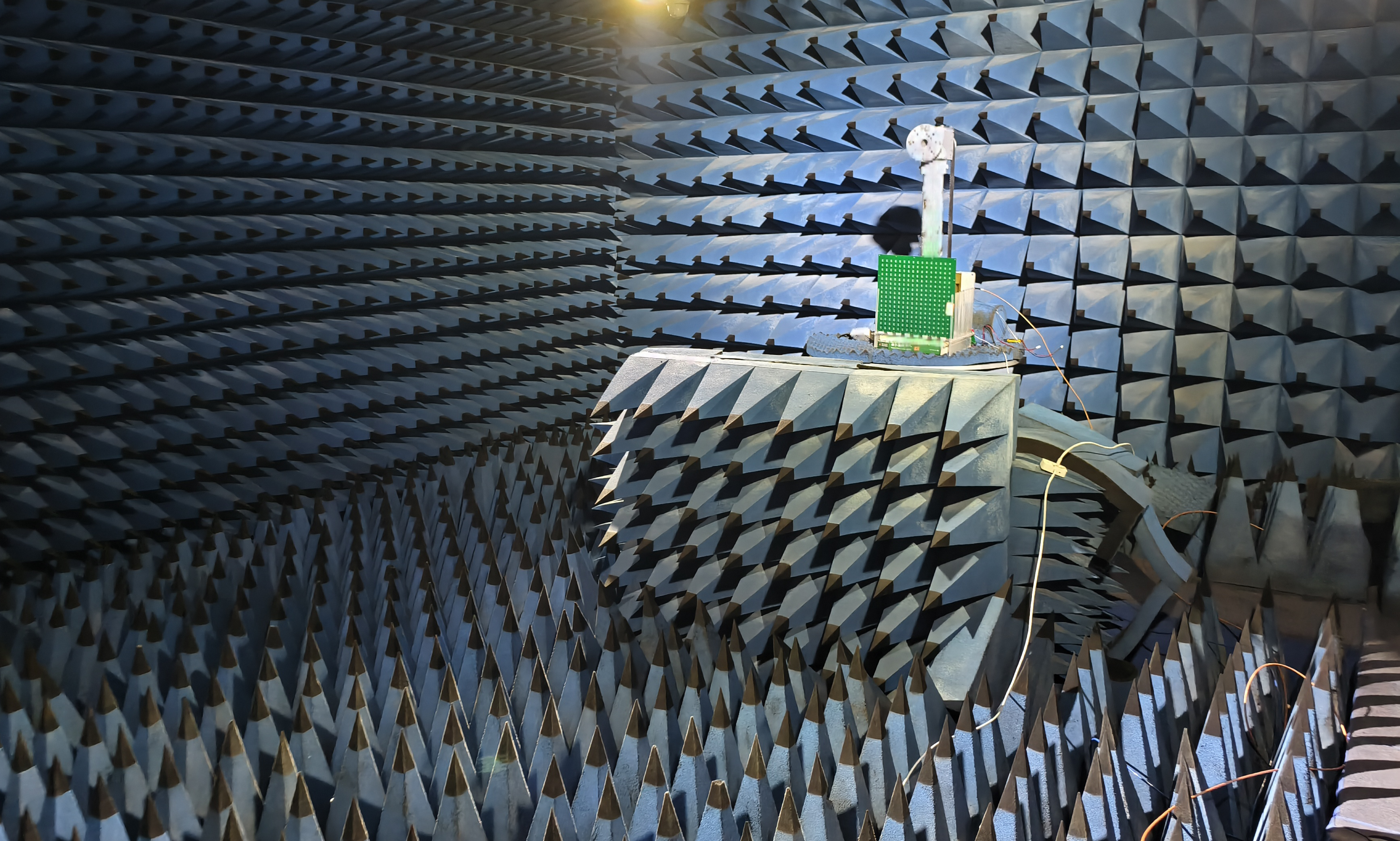}
  \caption{Experiment in the anechoic chamber.}
  \label{Darkroom}
\end{figure}

Two sets of parameters are configured for testing, where the two target communication UEs are supposed at $-20^{\circ}$ and $30^{\circ}$, and the angular suppression intervals are defined as $[0^{\circ},10^{\circ}]$ and $[-10^{\circ},0^{\circ}]$, respectively. The Non-Constraint scheme is adopted as a benchmark, which synthesizes beam patterns solely based on the communication UE directions without incorporating any angular suppression. This comparison evaluates the proposed algorithm's capability to regulate beam radiation within the designated suppression intervals.

\begin{figure}[!htbp]
  \centering
  \subfigure[]{
  \label{0-5}
  \includegraphics[width=0.92\linewidth]{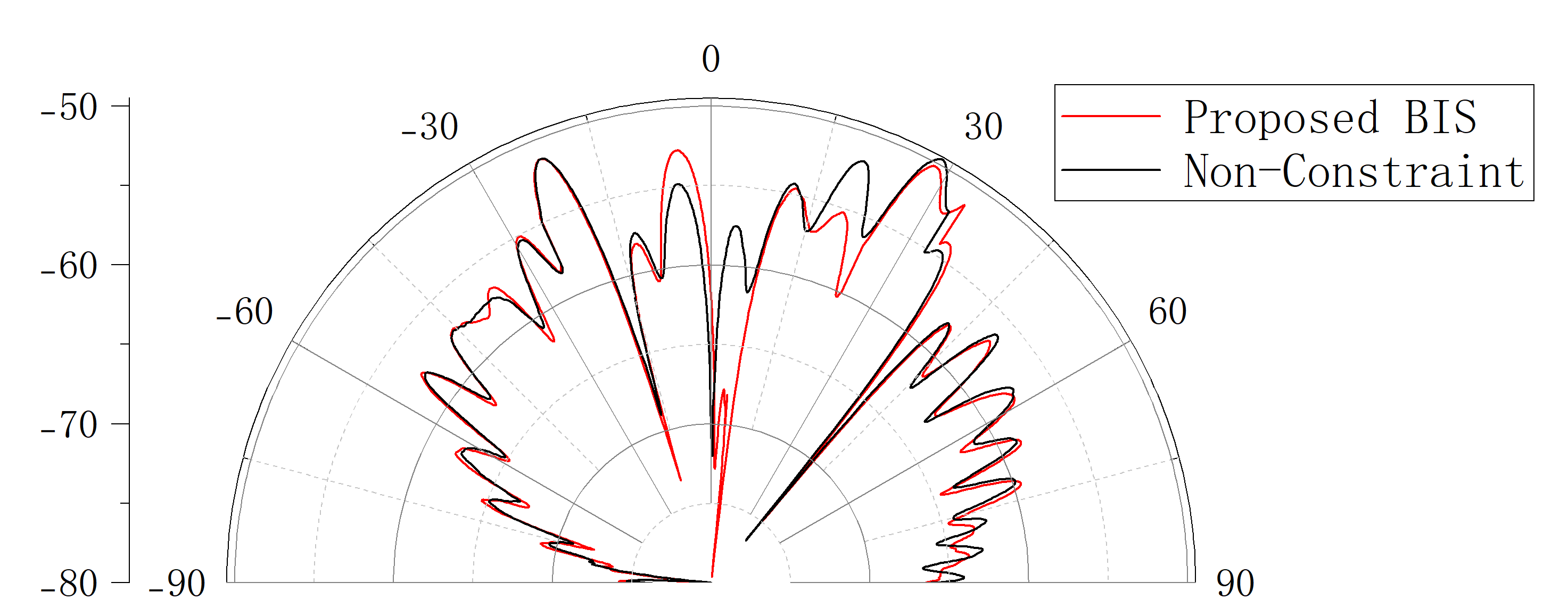}}
  \subfigure[]{
  \label{0-10}
  \includegraphics[width=0.92\linewidth]{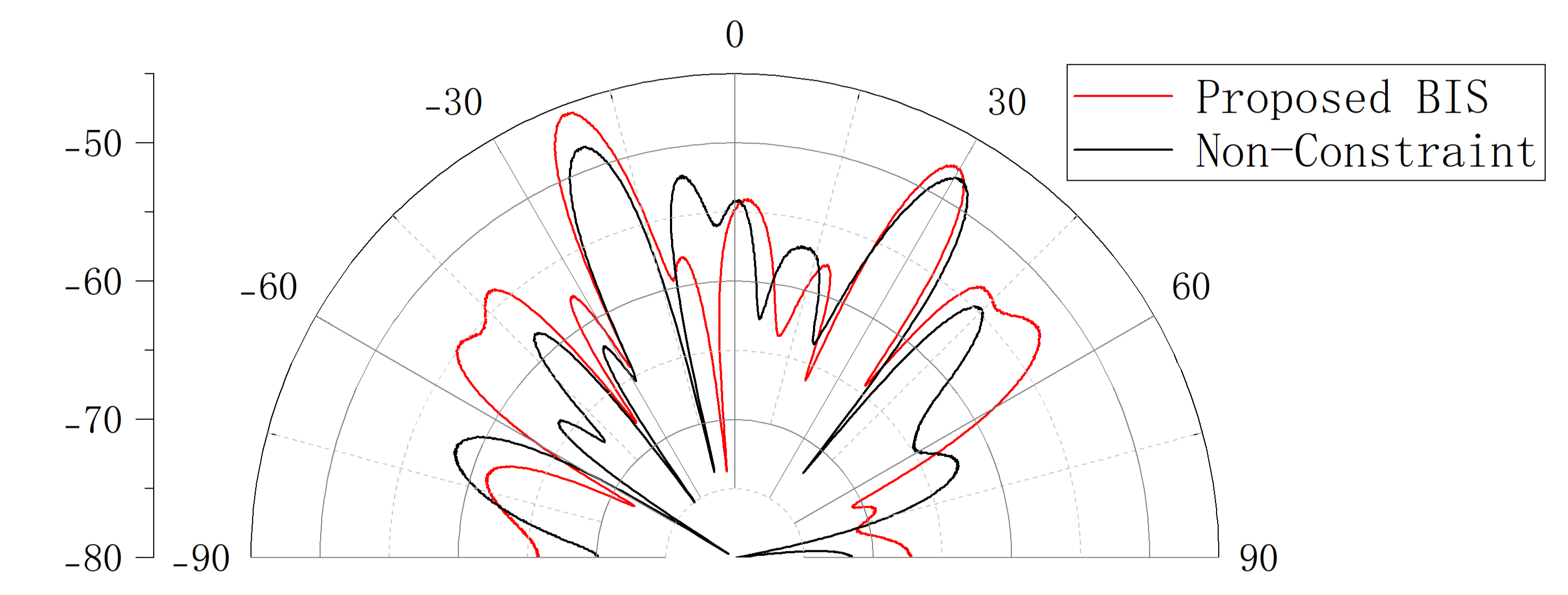}}
  \caption{Measurement in the anechoic chamber. The Tx is fixed behind the transmissive RIS with a distance of 0.6 m, and the recording distance is 5.0 m. The two main beam directions are set to $-20^{\circ}$ and $30^{\circ}$. (a) The suppression interval is set to $[0^{\circ},10^{\circ}]$. (b) The suppression interval is set to $[-10^{\circ},0^{\circ}]$.}
  \label{Measurement}
\end{figure}

The results in Fig.~\ref{Measurement} indicate that in both independent trials, the Non-Constraint scheme effectively generates a highly directional main beam toward the target communication UE directions at $-20^{\circ}$ and $30^{\circ}$. However, the observed beam pattern reveals the presence of significant sidelobes in non-target directions, which could lead to potential information leakage or signal interference.
When the proposed BIS method is applied for directional suppression, a noticeable energy reduction within the suppression intervals is observed. For instance, in Fig.~\ref{0-5}, the power peak at $5^{\circ}$, and in Fig.~\ref{0-10}, the power peak at $-5^{\circ}$, are attenuated by approximately 10 to 20 dB compared to the Non-Constraint.

Note that the beam energy reduction within the designated regions is not completely eliminated. Moreover, while suppressing beam in specific directions, unintended energy enhancements emerge in certain other directions, such as the beams around $-50^{\circ}$ and $50^{\circ}$ in Fig.~\ref{0-10}. These effects can be attributed to two primary factors. First, the RIS used in the experiment is a discrete 1-bit design with only 16 units in the observed plane, resulting in limited phase and beam resolution and an inability to achieve precise beam synthesis and suppression. Second, in accordance with the principle of energy conservation, and given that the transmissive RIS lacks absorption capabilities, suppressing beam energy in certain directions inevitably leads to increased energy in others. These undesirable energy distributions can be mitigated when utilizing high-resolution quantization RISs and considering beam behavior across the entire three-dimensional space.

\section{Conclusion}\label{Section6}
This paper presented a comprehensive framework for flexible beam synthesis and directional suppression using transmissive RIS, through addressing constrained Max-min optimization problems. We developed realistic geometrical optics-based models to characterize the system's input-output behavior and introduced a novel BIS algorithm capable of solving a broad range of constraint Max-min problems. The framework enables flexible beam synthesis, allowing for the enhancement and suppression of beams in specific spatial directions. Simulations and prototype experiments have demonstrated the effectiveness of the proposed method, highlighting its superior performance in beam control and power gains compared to existing algorithms.
The proposed framework exhibits significant potential across a range of applications, including multi-user communication, target detection and tracking, coverage optimization, anti-eavesdropping, interference mitigation, and energy-efficient networking.

\bibliographystyle{IEEEtran}
\bibliography{Reference}

\begin{thebibliography}{10}
\providecommand{\url}[1]{#1}
\csname url@samestyle\endcsname
\providecommand{\newblock}{\relax}
\providecommand{\bibinfo}[2]{#2}
\providecommand{\BIBentrySTDinterwordspacing}{\spaceskip=0pt\relax}
\providecommand{\BIBentryALTinterwordstretchfactor}{4}
\providecommand{\BIBentryALTinterwordspacing}{\spaceskip=\fontdimen2\font plus
\BIBentryALTinterwordstretchfactor\fontdimen3\font minus
  \fontdimen4\font\relax}
\providecommand{\BIBforeignlanguage}[2]{{%
\expandafter\ifx\csname l@#1\endcsname\relax
\typeout{** WARNING: IEEEtran.bst: No hyphenation pattern has been}%
\typeout{** loaded for the language `#1'. Using the pattern for}%
\typeout{** the default language instead.}%
\else
\language=\csname l@#1\endcsname
\fi
#2}}
\providecommand{\BIBdecl}{\relax}
\BIBdecl

\bibitem{cui2014coding}
T.~J. Cui, M.~Q. Qi, X.~Wan, J.~Zhao, and Q.~Cheng, ``Coding metamaterials,
  digital metamaterials and programmable metamaterials,'' \emph{Light-Sci.
  Appl.}, vol.~3, no.~10, pp. e218--e218, Oct. 2014.

\bibitem{basar2019wireless}
E.~Basar, M.~Di~Renzo, J.~De~Rosny, M.~Debbah, M.-S. Alouini, and R.~Zhang,
  ``Wireless communications through reconfigurable intelligent surfaces,''
  \emph{IEEE Access}, vol.~7, pp. 116\,753--116\,773, Aug. 2019.

\bibitem{wu2023intelligent}
Q.~Wu \emph{et~al.}, ``Intelligent surfaces empowered wireless network: Recent
  advances and the road to {6G},'' \emph{Proc. IEEE}, vol. 112, no.~7, pp.
  724--763, Jul. 2024.

\bibitem{xiong2024fair}
R.~Xiong, K.~Yin, T.~Mi, J.~Lu, K.~Wan, and R.~C. Qiu, ``Fair beam allocations
  through reconfigurable intelligent surfaces,'' \emph{IEEE J. Sel. Areas
  Commun.}, vol.~42, no.~11, pp. 3095--3109, Nov. 2024.

\bibitem{tang2023transmissive}
J.~Tang, M.~Cui, S.~Xu, L.~Dai, F.~Yang, and M.~Li, ``Transmissive {RIS} for
  {B5G} communications: Design, prototyping, and experimental demonstrations,''
  \emph{IEEE Trans. Commun.}, vol.~71, no.~11, pp. 6605--6615, Nov. 2023.

\bibitem{li2022coverage}
Z.~Li, H.~Hu, J.~Zhang, and J.~Zhang, ``Coverage analysis of multiple
  transmissive {RIS}-aided outdoor-to-indoor mmwave networks,'' \emph{IEEE
  Trans. Broadcast.}, vol.~68, no.~4, pp. 935--942, Dec. 2022.

\bibitem{kitayama2021transparent}
D.~Kitayama, Y.~Hama, K.~Goto, K.~Miyachi, T.~Motegi, and O.~Kagaya,
  ``Transparent dynamic metasurface for a visually unaffected reconfigurable
  intelligent surface: controlling transmission/reflection and making a window
  into an {RF} lens,'' \emph{Opt. Express}, vol.~29, no.~18, pp.
  29\,292--29\,307, Aug. 2021.

\bibitem{docomo2020docomo}
\BIBentryALTinterwordspacing
N.~Docomo, ``{DOCOMO} conducts world's first successful trial of transparent
  dynamic metasurface,'' Jan. 2020, {Accessed}: 2024-10-04. [Online].
  Available:
  \url{https://www.docomo.ne.jp/english/info/media_center/pr/2020/0117_00.html}
\BIBentrySTDinterwordspacing

\bibitem{bai2020high}
X.~Bai \emph{et~al.}, ``High-efficiency transmissive programmable metasurface
  for multimode {OAM} generation,'' \emph{Adv. Opt. Mater.}, vol.~8, no.~17, p.
  2000570, Jun. 2020.

\bibitem{wan2022space}
X.~Wan, J.~W. Wang, Z.~A. Huang, B.~Y. Li, Q.~Xiao, and T.~J. Cui,
  ``Space--time--frequency modulation mechanisms of monochromatic and
  nonmonochromatic electromagnetic waves on a digital programmable transmission
  metasurface,'' \emph{Adv. Funct. Mater.}, vol.~32, no.~13, p. 2107557, Mar.
  2022.

\bibitem{zhang2024design}
\BIBentryALTinterwordspacing
J.~Zhang, R.~Xiong, J.~Liu, T.~Mi, and R.~C. Qiu, ``Design and prototyping of
  transmissive {RIS}-aided wireless communication,'' Feb. 2024. [Online].
  Available: \url{https://arxiv.org/abs/2402.05570}
\BIBentrySTDinterwordspacing

\bibitem{wang2023doppler}
Y.~Wang, G.~Wang, R.~He, B.~Ai, and C.~Tellambura, ``Doppler shift and channel
  estimation for intelligent transparent surface assisted communication systems
  on high-speed railways,'' \emph{IEEE Trans. Commun.}, vol.~71, no.~7, pp.
  4204--4215, Jul. 2023.

\bibitem{grossi2024beampattern}
\BIBentryALTinterwordspacing
E.~Grossi and L.~Venturino, ``Beampattern design for transmit architectures
  based on reconfigurable intelligent surfaces,'' Apr. 2024. [Online].
  Available: \url{https://arxiv.org/abs/2306.15297}
\BIBentrySTDinterwordspacing

\bibitem{li2021beamforming}
Z.~Li, W.~Chen, and H.~Cao, ``Beamforming design and power allocation for
  transmissive {RMS}-based transmitter architectures,'' \emph{IEEE Wirel.
  Commun. Lett.}, vol.~11, no.~1, pp. 53--57, Jan. 2021.

\bibitem{sun2022ris}
Y.~Sun \emph{et~al.}, ``{RIS}-assisted robust hybrid beamforming against
  simultaneous jamming and eavesdropping attacks,'' \emph{IEEE Trans. Wirel.
  Commun.}, vol.~21, no.~11, pp. 9212--9231, Nov. 2022.

\bibitem{jamali2020intelligent}
V.~Jamali, A.~M. Tulino, G.~Fischer, R.~R. M{\"u}ller, and R.~Schober,
  ``Intelligent surface-aided transmitter architectures for millimeter-wave
  ultra massive {MIMO} systems,'' \emph{IEEE Open J. Commun. Soc.}, vol.~2, pp.
  144--167, 2020.

\bibitem{li2023robust}
B.~Li \emph{et~al.}, ``Robust weighted sum-rate maximization for transmissive
  {RIS} transmitter enabled {RSMA} networks,'' \emph{IEEE Commun. Lett.},
  vol.~27, no.~10, pp. 2847--2851, Oct. 2023.

\bibitem{li2023towards}
Z.~Li \emph{et~al.}, ``Towards transmissive {RIS} transceiver enabled uplink
  communication systems: Design and optimization,'' \emph{IEEE Internet Things
  J.}, vol.~11, no.~4, pp. 6788--6801, Feb. 2024.

\bibitem{demir2024usercentric}
\BIBentryALTinterwordspacing
Özlem Tuğfe~Demir and E.~Björnson, ``User-centric cell-free massive {MIMO}
  with {RIS}-integrated antenna arrays,'' Sep. 2024. [Online]. Available:
  \url{https://arxiv.org/abs/2409.15765}
\BIBentrySTDinterwordspacing

\bibitem{sun2022energy}
Y.~Sun \emph{et~al.}, ``Energy-efficient hybrid beamforming for multilayer
  {RIS}-assisted secure integrated terrestrial-aerial networks,'' \emph{IEEE
  Trans. Commun.}, vol.~70, no.~6, pp. 4189--4210, Jun. 2022.

\bibitem{li2024transmissive}
\BIBentryALTinterwordspacing
Z.~Li \emph{et~al.}, ``Transmissive {RIS} enabled transceiver
  systems:architecture, design issues and opportunities,'' Aug. 2024. [Online].
  Available: \url{https://arxiv.org/abs/2408.13483}
\BIBentrySTDinterwordspacing

\bibitem{liu2023transmissive}
Z.~Liu, W.~Chen, Z.~Li, J.~Yuan, Q.~Wu, and K.~Wang, ``Transmissive
  reconfigurable intelligent surface transmitter empowered cognitive {RSMA}
  networks,'' \emph{IEEE Commun. Lett.}, vol.~27, no.~7, pp. 1829--1833, Jul.
  2023.

\bibitem{wang2024multiuser}
Z.~Wang \emph{et~al.}, ``Multi-user {ISAC} through stacked intelligent
  metasurfaces: New algorithms and experiments,'' in \emph{Proc. IEEE Global
  Commun. Conf. (GLOBECOM)}, Cape Town, South Africa, Dec. 2024, pp.
  4442--4447.

\bibitem{niu2024stacked}
H.~Niu, J.~An, A.~Papazafeiropoulos, L.~Gan, S.~Chatzinotas, and M.~Debbah,
  ``Stacked intelligent metasurfaces for integrated sensing and
  communications,'' \emph{IEEE Wirel. Commun. Lett.}, vol.~13, no.~10, pp.
  2807--2811, Oct. 2024.

\bibitem{han2020half}
H.~Han, Y.~Liu, and L.~Zhang, ``On half-power beamwidth of intelligent
  reflecting surface,'' \emph{IEEE Commun. Lett.}, vol.~25, no.~4, pp.
  1333--1337, Apr. 2021.

\bibitem{han2020cooperative}
Y.~Han, S.~Zhang, L.~Duan, and R.~Zhang, ``Cooperative double-{IRS} aided
  communication: Beamforming design and power scaling,'' \emph{IEEE Wirel.
  Commun. Lett.}, vol.~9, no.~8, pp. 1206--1210, Aug. 2020.

\bibitem{ma2023multi}
X.~Ma, H.~Zhang, X.~Chen, Y.~Fang, and D.~Yuan, ``Multi-hop multi-{RIS}
  wireless communication systems: Multi-reflection path scheduling and
  beamforming,'' \emph{IEEE Trans. Wirel. Commun.}, vol.~23, no.~7, pp.
  6778--6792, Jul. 2024.

\bibitem{stutzman2012antenna}
W.~L. Stutzman and G.~A. Thiele, \emph{Antenna theory and design}.\hskip 1em
  plus 0.5em minus 0.4em\relax Hoboken, NJ, USA: Wiley, 2012.

\bibitem{wang2022beam}
P.~Wang, J.~Fang, W.~Zhang, Z.~Chen, H.~Li, and W.~Zhang, ``Beam training and
  alignment for {RIS}-assisted millimeter-wave systems: State of the art and
  beyond,'' \emph{IEEE Wirel. Commun.}, vol.~29, no.~6, pp. 64--71, Dec. 2022.

\bibitem{xiao2016hierarchical}
Z.~Xiao, T.~He, P.~Xia, and X.-G. Xia, ``Hierarchical codebook design for
  beamforming training in millimeter-wave communication,'' \emph{IEEE Trans.
  Wirel. Commun.}, vol.~15, no.~5, pp. 3380--3392, May 2016.

\bibitem{wang2023hierarchical}
J.~Wang, W.~Tang, S.~Jin, C.-K. Wen, X.~Li, and X.~Hou, ``Hierarchical
  codebook-based beam training for {RIS}-assisted mmwave communication
  systems,'' \emph{IEEE Trans. Commun.}, vol.~71, no.~6, pp. 3650--3662, Jun.
  2023.

\bibitem{heng2021machine}
Y.~Heng and J.~G. Andrews, ``Machine learning-assisted beam alignment for
  mmwave systems,'' \emph{IEEE Trans. Cogn. Commun. Netw.}, vol.~7, no.~4, pp.
  1142--1155, Dec. 2021.

\bibitem{droulias2024reconfigurable}
S.~Droulias, G.~Stratidakis, E.~Bj{\"o}rnson, and A.~Alexiou, ``Reconfigurable
  intelligent surfaces as spatial filters,'' \emph{IEEE Trans. Wirel. Commun.},
  vol.~23, no.~11, pp. 16\,922--16\,934, Aug. 2024.

\bibitem{rahal2022arbitrary}
M.~Rahal \emph{et~al.}, ``Arbitrary beam pattern approximation via {RISs} with
  measured element responses,'' in \emph{2022 Joint European Conference on
  Networks and Communications \& {6G} Summit (EuCNC/6G Summit)}.\hskip 1em plus
  0.5em minus 0.4em\relax IEEE, Jul. 2022, pp. 506--511.

\bibitem{rahal2023performance}
------, ``Performance of {RIS}-aided near-field localization under beams
  approximation from real hardware characterization,'' \emph{EURASIP J. Wirel.
  Commun. Netw.}, vol. 2023, no.~1, p.~86, Aug. 2023.

\bibitem{meng2023efficient}
S.~Meng \emph{et~al.}, ``An efficient multi-beam generation method for
  millimeter-wave reconfigurable intelligent surface: Simulation and
  measurement,'' \emph{IEEE Trans. Veh. Technol.}, vol.~72, no.~10, pp.
  13\,752--13\,757, Oct. 2023.

\bibitem{mi2023towards}
T.~Mi, J.~Zhang, R.~Xiong, Z.~Wang, P.~Zhang, and R.~C. Qiu, ``Towards
  analytical electromagnetic models for reconfigurable intelligent surfaces,''
  \emph{IEEE Trans. Wirel. Commun.}, vol.~23, no.~5, pp. 4170--4185, May 2024.

\bibitem{yaghjian1986overview}
A.~Yaghjian, ``An overview of near-field antenna measurements,'' \emph{IEEE
  Trans. Antennas Propag.}, vol.~34, no.~1, pp. 30--45, Jan. 1986.

\bibitem{liu2023near}
Y.~Liu, Z.~Wang, J.~Xu, C.~Ouyang, X.~Mu, and R.~Schober, ``Near-field
  communications: A tutorial review,'' \emph{IEEE Open J. Commun. Soc.},
  vol.~4, pp. 1999--2049, Aug. 2023.

\bibitem{xiong2024optimaldiscrete}
R.~Xiong \emph{et~al.}, ``Optimal power aggregation of reconfigurable
  intelligent surfaces: An alternating inner product maximization approach,''
  \emph{IEEE Trans. Commun.}, Jan. 2025, {Early Access}, doi:
  \href{https://ieeexplore.ieee.org/abstract/document/10820854}{10.1109/TCOMM.2025.3525565}.

\bibitem{zhang2008compensated}
K.~Zhang, ``Compensated convexity and its applications,'' \emph{Ann. Inst.
  Henri Poincare-Anal. Non Lineaire}, vol.~25, no.~4, pp. 743--771, Jan. 2008.

\bibitem{zhang2022tight}
K.~Yin and K.~Zhang, ``A tight smooth approximation of the maximum function and
  its applications,'' \emph{J. Convex Anal.}, vol.~29, no.~4, pp. 1193--1224,
  Apr. 2022.

\bibitem{yosida2012functional}
K.~Yosida, \emph{Functional analysis}.\hskip 1em plus 0.5em minus 0.4em\relax
  Berlin, Germany: Springer, 2012.

\bibitem{moreau1965proximite}
J.-J. Moreau, ``Proximit{\'e} et dualit{\'e} dans un espace hilbertien,''
  \emph{Bulletin de la Soci{\'e}t{\'e} math{\'e}matique de France}, vol.~93,
  pp. 273--299, 1965.

\bibitem{nesterov1998introductory}
Y.~Nesterov, ``Introductory lectures on convex programming volume i: Basic
  course,'' \emph{Lect. Notes}, vol.~3, no.~4, p.~5, Jul. 1998.

\bibitem{devolder2014first}
O.~Devolder, F.~Glineur, and Y.~Nesterov, ``First-order methods of smooth
  convex optimization with inexact oracle,'' \emph{Math. Program.}, vol. 146,
  no. 1-2, pp. 37--75, Aug. 2014.

\bibitem{li2024weighted}
H.~Li, H.~Wang, and P.~Li, ``Weighted sum-rate maximization for
  collaborative-{RIS}-assisted downlink {MISO} system,'' \emph{IEEE Wirel.
  Commun. Lett.}, vol.~13, no.~6, pp. 1740--1744, Jun. 2024.

\bibitem{subhash2023max}
A.~Subhash, A.~Kammoun, A.~Elzanaty, S.~Kalyani, Y.~H. Al-Badarneh, and M.-S.
  Alouini, ``{Max-Min SINR} optimization for {RIS}-aided uplink communications
  with green constraints,'' \emph{IEEE Wirel. Commun. Lett.}, vol.~12, no.~6,
  pp. 942--946, Jun. 2023.

\bibitem{wu2019intelligent}
Q.~Wu and R.~Zhang, ``Intelligent reflecting surface enhanced wireless network
  via joint active and passive beamforming,'' \emph{IEEE Trans. Wirel.
  Commun.}, vol.~18, no.~11, pp. 5394--5409, Aug. 2019.

\end{thebibliography}
\vfill
\end{document}